\DeclareMathOperator*{\argmin}{arg\,min}
\DeclareMathOperator*{\esssup}{ess\,sup}
\newcommand{\norm}[1]{\left\lVert#1\right\rVert}
\newcommand{\floor}[1]{\left\lfloor#1\right\rfloor}
\newcommand\independent{\protect\mathpalette{\protect\independenT}{\perp}}
\def\independenT#1#2{\mathrel{\rlap{$#1#2$}\mkern2mu{#1#2}}}
\newcommand{\ind}[1]{\mathbbm{1}_{\{#1\}}}
\DeclareMathOperator{\Var}{Var}
\DeclareMathOperator{\E}{\mathbb{E}}
\DeclareMathOperator{\p}{\mathbb{P}}
\newcommand{\mcX}{\mathcal{X}}
\newcommand{\mcY}{\mathcal{Y}}
\newcommand{\mcZ}{\mathcal{Z}}
\newcommand{\bbR}{\mathbb{R}}
\begin{document}

\title{Formal Privacy for Partially Private Data}

\author{\name Jeremy Seeman \email jhs5496@psu.edu \\
       \addr Department of Statistics\\
       Pennsylvania State University \\
       University Park, PA 16802
       \AND
       \name Matthew Reimherr \email mlr36@psu.edu \\
       \addr Department of Statistics\\
       Pennsylvania State University \\
       University Park, PA 16802 
       \AND 
       \name Aleksandra Slavkovic \email abs12@psu.edu \\
       \addr Department of Statistics\\
       Pennsylvania State University \\
       University Park, PA 16802}


\maketitle

\begin{abstract}
Differential privacy (DP) quantifies privacy loss by analyzing noise injected into output statistics. For non-trivial statistics, this noise is necessary to ensure finite privacy loss. However, data curators frequently release collections of statistics where some use DP mechanisms and others are released as-is, i.e., without additional randomized noise. Consequently, DP alone cannot characterize the privacy loss attributable to the entire collection of releases. In this paper, we present a privacy formalism, $(\epsilon, \{ \Theta_z\}_{z \in \mathcal{Z}})$-Pufferfish ($\epsilon$-TP for short when $\{ \Theta_z\}_{z \in \mathcal{Z}}$ is implied), a collection of Pufferfish mechanisms indexed by realizations of a random variable $Z$ representing public information not protected with DP noise. First, we prove that this definition has similar properties to DP. Next, we introduce mechanisms for releasing partially private data (PPD) satisfying $\epsilon$-TP and prove their desirable properties. We provide algorithms for sampling from the posterior of a parameter given PPD. We then compare this inference approach to the alternative where noisy statistics are deterministically combined with Z. We derive mild conditions under which using our algorithms offers both theoretical and computational improvements over this more common approach. Finally, we demonstrate all the effects above on a case study on COVID-19 data.
\end{abstract}

\begin{keywords}
  data privacy, optimal transport, approximate Bayesian computation
\end{keywords}

\section{Introduction}

Differential privacy (DP) \citep{dwork2006calibrating, dwork2014algorithmic}, a framework for releasing statistical outputs with provable privacy guarantees, has earned its popularity through its desirable properties such as robustness to post-processing and methodological transparency. DP offers semantic interpretations about what adversaries can learn in excess of their prior information, via Pufferfish \citep{kifer2014pufferfish}, or by comparing information gains counterfactually based on whether an individual's data was used via coupled worlds \cite{bassily2013coupled} or posterior-to-posterior semantics \citep{kasiviswanathan2008note}. These semantics describe privacy loss attributable to the DP mechanism, regardless of other arbitrary prior information. In recent years, there has been a massive proliferation of DP methods, with notable developments in private selection problems \citep{mcsherry2007mechanism,reimherr2019kng,asi2020}, statistical inference tasks \citep{awan2018differentially,canonne2019structure,biswas2020coinpress}, and synthetic data generation \citep{snoke2018pmse,torkzadehmahani2019dp,mckenna2018optimizing}. 

In the central trust model, a centralized data curator processes all statistics published using data from a single confidential database. DP can only guarantee finite total privacy losses for a collection of outputs in the setting where each non-trivial output is released via a DP mechanism. However, data curators often release a collection of statistics where some are published with DP mechanisms and others are not. In these settings, DP can describe the privacy loss attributable to the outputs released with DP mechanisms, but this fails to non-trivially describe the privacy guarantees of everything the data curator actually released. Our central goal in this paper is to recover some formal privacy guarantees when DP does not meaningfully describe the entirety of the data curator's actions due to the presence of what we call ``public information," i.e., statistics derived from the confidential database available without privacy-preserving randomized noise needed to satisfy DP. We investigate the unique problems associated with partially private data (PPD), data releases which contain both public information and query responses sanitized for formal privacy preservation.

There are many different circumstances in which PPD can arise; we briefly list some important cases. First, data curators may be legally or contractually bound to release public information along with their private releases. For tabular count data, public summary statistics may be released as-is, such as mariginal summaries \citep{slavkovic2010synthetic,Abowd2019} or survey methodology descriptions in data sharing agreements \citep{seeman2021posterior}. Similarly, data curators may make database schema and modeling decisions non-privately, such as choosing tuning parameters or setting privacy budgets based on particular utility goals \citep{xiao2021optimizing}. Second, information about the database sample may be public prior to implementing formally private methodology. Most data curators do not currently use DP methods, posing unresolved challenges in reconciling future releases with past ones when results are temporally autocorrelated \citep{quick2021generating}. These problems affect countless data curators, and it poses a significant barrier to wider adoption of privacy-preserving methodology when DP fails to describe what data curators are actually releasing in practice. 

\subsection{Contributions}
\label{sec:formalism-intro-contributions}

In this paper, we present a privacy formalism, $(\epsilon, \{ \Theta_z\}_{z \in \mathcal{Z}})$-Pufferfish ($\epsilon$-TP for short when $\{ \Theta_z\}_{z \in \mathcal{Z}}$ is implied), which is a collection of Pufferfish \citep{kifer2014pufferfish} mechanisms indexed by realizations of a random variable $Z$. This variable represents public information about the confidential data not protected with DP noise. Recall that Pufferfish considers comparisons not of adjacent databases (as in DP), but conditional distributions given information about the confidential database; our framework considers these same distributions, but additionally conditioned on the realization of public information $Z = z$ that we happen to observe.

Our contributions are as follows. 1) We prove that $\epsilon$-TP maintains similar properties and semantic interpretations as DP (Lemmas \ref{lem:etp-bayes} through \ref{lem:etp-parcomp}). 2) We propose two release mechanisms for PPD, the Wasserstein Exponential mechanism (Theorem \ref{thm:wassexp} generalizing \citep{mcsherry2007mechanism}) and Wasserstein $K$-norm gradient mechanisms (Theorem \ref{thm:wasskng} generalizing \citep{reimherr2019kng}), and prove their useful properties. 3) we provide algorithms for sampling from the posterior of a parameter given PPD, based on rejection and importance sampling (Algorithms \ref{alg:ppi-rej} and \ref{alg:ppi-is}, respectively). 4) We then compare this inference approach to the common alternative where noisy statistics are deterministicly combined with public information. We then derive conditions where our mechanisms require adding asymptotically less noise (Corollary \ref{cor:manifold_ext_vs_intr}) and offer stochastic improvements in inferential quality (Theorem \ref{thm:expfamdom}). 5) Finally, we demonstrate all the effects above on a case study on COVID-19 data (Section \ref{sec:data-analysis}). 

{\bf Broader Impacts:} Organizations hoping to share formally private results are frequently required to release results in a non-DP manner about the same database, even if the exact correspondence is not deterministic (for example, statistics based on a random sample from the database). We argue that this case warrants special attention because unlike the examples given in \citet{dwork2017exposed}, which correctly states that  population-level inferences about individuals cannot be quantified as privacy violations, here the data curator is directly in charge of multiple releases from the same database, only some of which may satisfy DP. This is a problem for which DP alone offers no substantive solutions other than to narrow the scope of which privacy losses are quantified. In this paper, we argue against this approach, as it fails to hold data curators accountable for additional privacy risks emanating from non-DP releases. We may hope that in the future, organizations are better equipped to implement DP as intended. Until such a time when organizations' data processing requirements can accommodate DP, our work provides a solution to a real and present formal privacy problem.

\subsection{Related literature} 

Research in DP, synonymous with \citep{dwork2006calibrating}, has spurned hundreds of similar definitions \citep{desfontaines2019sok}. Some of those (e.g., Pufferfish privacy \citep{kifer2014pufferfish}, Blowfish privacy \citep{he2014blowfish}, coupled worlds privacy \citep{bassily2013coupled}, and dependent differential privacy \citep{liu2016dependence}) can be extended to accommodate certain kinds of public information. We also note that extensions of Pufferfish have been used to address other conceptual issues with DP, such as accounting for non-individual database attribute disclosure risks \citep{zhang2022attribute}.

Our framework differs in a few key ways. First, these previous frameworks offer guarantees that do not change contextually based on the realized value of the public information. This is why we cannot naively apply Pufferfish to our problem, because the space of candidate conditional distributions for our database changes based on the realized value of $Z$; we need not consider a ``worst-case" public information realization among all possible values of $Z$ when the realization is known publicly. Second, by treating public information as a random variable, we consider the effects of probabilistic public information by smoothly interpolating between the best case scenario (in which $X$ and $Z$ are entirely independent) and the worst case scenario (in which $X$ and $Z$ are perfectly correlated). 

Existing works on formal privacy in the presence of public information typically focus on congeniality, or agreement between results derived from private and public information \citep{barak2007privacy,hay2010boosting,ding2011differentially,abowd2019census}. Attempts to reinterpret these guarantees with semantic similarity to DP have been studied in \citep{ashmead2019effective,gong2020congenial}. Our work does not require congeniality, but we prove that our proposed mechanisms can satisfy it. Recent work has also considered mechanism design in the presence of additional public information \citep{bassily2020private} and provides helpful theoretical bounds, but it requires that the private and public samples are independently and identically distributed (iid). We relax the iid assumption, since our interest is in changes to mechanism design and valid statistical inference explicitly due to dependencies between private and public information.

\section{Privacy guarantees of joint private and public releases}

We provide all complete proofs in Appendix 
\ref{apx:proofs}. Throughout this paper, we consider a parameter of interest $\theta \in \Theta$, confidential database $X \in \mathcal{X}^n$, public information $Z \in \mathcal{Z}$, and sanitized private release $Y \in \mathcal{Y}$. In some cases, we consider $Y^* = h(Y, Z)$, where $Y^*$ post-processes both $Y$ and $Z$ according to the arbitrary function $h$. This is the approach taken by the Geometric mechanism \citep{Ghosh2012}, the Top-Down algorithm \citep{Abowd2019}, and private hypothesis testing \citep{canonne2019structure}, among many other methods. 

\subsection{Motivation: $\epsilon$-DP semantics with public information}
\label{sec:post-break}

In this section, we motivate why we want to characterize the privacy loss attributable to both $Y$ and $Z$ simultaneously. We first review $\epsilon$-DP in its original formulation. Note that in this Subsection \ref{sec:post-break}, for notational ease, we temporarily assume that the distributions in question admit mass functions. 

\begin{definition}[$\epsilon$-DP \citep{dwork2006calibrating,dwork2006our}]
Let $x \in \mathcal{X}^n$ and let $x' \in \mathcal{X}^n$ be any database created by modifying one entry in $x$. A randomized algorithm $\{ M(x): x \in \mathcal{X}^n \}$, i.e., a collection of probability distributions over $\mathcal{Y}$ indexed by elements of $\mathcal{X}^n$, satisfies $\epsilon$-DP iff for all such $x,x'$ and $y \in \mathcal{Y}$.  
\begin{equation}
\label{eq:edp_orig}
\p(M(x) = y) \leq \p(M(x') = y) e^\epsilon .
\end{equation}
\end{definition}

As presented in Equation \ref{eq:edp_orig}, DP is a property of a collection of marginal probability distributions, which does not account for randomness in the underlying data $X$. In \cite{kasiviswanathan2014semantics}, the authors describe the ``posterior-to-posterior" interpretation of $\epsilon$-DP that provides privacy protections accounting for priors on $\mathcal{X}$. Specifically, if $\pi(\cdot)$ is an arbitrary prior mass function on $\mathcal{X}$, the authors define:
\begin{equation}
\label{eq:post-game}
\pi_{i,v}(x \mid y) \triangleq \frac{\p(M(x_{i,v}) = y) \pi(x)}{\sum_{x^* \in \mathcal{X}^n} \p(M(x^*_{i,v}) = y) \pi(x^*)} .
\end{equation}
Above, $x_{i,v}$ is the database $x$ where the $i$th record is replaced with arbitrary, data-independent value $v$. The authors show, for all $i \in [n]$, $v \in \mathcal{X}$, $x \in \mathcal{X}^n$, and any possible prior $\pi$:
\begin{equation}
d_{\mathrm{TV}}\left( \pi(\cdot \mid y), \pi_{i,v}(\cdot \mid y) \right) \leq e^\epsilon - 1,
\label{eq:post-semantic}
\end{equation}
where $d_\mathrm{TV}$ is total variational distance. 

One might conjecture that posterior-to-posterior semantics ought to cover public information, since it places no restrictions on the priors $\pi(\cdot)$. This is true in terms of what privacy losses are attributable to the DP mechanism itself when the mechanism does not incorporate information from $Z$. Specifically, suppose the equivalent of Equation \ref{eq:post-game} including public information $Z$ takes the form:
\begin{equation}
\label{eq:post-game-z-ind}
\pi_{i,v}(x \mid y, z) \triangleq \frac{\p(M(x_{i,v}) = y) \pi(x \mid z)}{\sum_{x^* \in \mathcal{X}^n} \p(M(x^*_{i,v}) = y) \pi(x^* \mid z)} .
\end{equation}
Then, using the same argument as \cite{kasiviswanathan2014semantics}, we have:
\begin{equation}
d_{\mathrm{TV}}\left( \pi(\cdot \mid y, z), \pi_{i,v}(\cdot \mid y, z) \right) \leq e^\epsilon - 1.
\label{eq:post-semantic-z-ind}
\end{equation}
Our concern in this paper is \emph{not} Equation \ref{eq:post-game-z-ind}; instead, we analyze the scenario when the mechanism form $M$ \emph{does} depend on the public information $Z$, in which case the equivalent expression to Equation \ref{eq:post-game-z-ind} could be ill-defined as a conditional distribution if we are conditioning on sets of measure zero. In this case, Equation \ref{eq:post-semantic-z-ind} need not hold.
Moreover, as a conditional distribution, the distribution of the mechanism output $Y \mid X, Z$ need not be the same as $Y \mid X$, as we have seen in \cite{Seeman2020} and \cite{gong2020congenial}. We will also see this throughout the paper, particularly in the case study (Section \ref{sec:data-analysis}). This captures the case when public information is used to decide \emph{how} to implement the mechanism, such as choosing its form or setting hyperparameters. Therefore if we want to consider the formal privacy guarantees from releasing both $Y$ and $Z$, we cannot rely on DP's semantic interpretations alone. 

\subsection{Proposal \texorpdfstring{$\epsilon$}{epsilon}-TP}
\label{sec:etp}

Next, we motivate why we consider $Y \mid X, Z$ (as opposed to $Y, Z \mid X$) and we propose our definition. To move towards this inferential perspective, we now review Pufferfish privacy \citep{kifer2014pufferfish}:

\begin{definition}[Pufferfish \citep{kifer2014pufferfish}]
\label{def:puff}Let $\mathbb{D}$ be a collection of probability distributions on $(\mathcal{X}^n, \mathcal{F}_X)$ indexed by a parameter $\theta \in \Theta$ (called ``data evolution scenarios" \footnote{more commonly referred to as ``data generating distributions" in statistics}). Let $\mathbb{S} \subset \mathcal{F}_X$ be a collection of events (called ``secrets"), and let $\mathbb{S}_{\mathrm{pairs}} \subset \mathbb{S} \times \mathbb{S}$ be a collection of pairs of disjoint events in $\mathbb{S}$ (called ``discriminative pairs"). Then a mechanism that releases random variable $Y$ in $(\mathcal{Y}, \mathcal{F}_Y)$ satisfies $\epsilon$-Pufferfish$(\mathbb{S}, \mathbb{S}_{\mathrm{pairs}}, \mathbb{D})$ privacy if for all $B \in \mathcal{F}_Y$, $\theta \in \mathbb{D}$, and $(s_1, s_2) \in \mathbb{S}_{\mathrm{pairs}}$ such that $\p(s_i \mid \theta) \notin \{ 0, 1 \}$ for $i = 1, 2$:
\begin{equation}
\label{eq:puff}
\begin{cases}
\p(Y \in B \mid s_1, \theta) \leq e^\epsilon \p(Y \in B \mid s_2, \theta) \\
\p(Y \in B \mid s_2, \theta) \leq e^\epsilon \p(Y \in B \mid s_1, \theta).
\end{cases}
\end{equation}
\end{definition}

Note that we will use $\mathbb{D}_{\mathrm{DP}}$, $\mathbb{S}_{\mathrm{DP}}$, and $\mathbb{S}_{\mathrm{pairs,DP}}$ to refer to the Pufferfish components that represent a semantic interpretation of $\epsilon$-DP's guarantees in the language of Pufferfish. In particular, let $\mathcal{H} \triangleq \{ h_i \}_{i=1}^N$ be a population of $N$ individuals and $\mathcal{R} = \{ x_i \}_{i=1}^n$ be a sample of records. Define the events:
\begin{equation}
\begin{cases}
\sigma_i &\triangleq \text{``record $r_i$ belongs to individual $h_i$ in the data"} \\
\sigma_{i,x} &\triangleq \text{``record $r_i$ belongs to individual $h_i$ in the data and has value $x \in \mathcal{X}$"}
\end{cases}.
\end{equation}
Then:
\begin{equation}
\begin{cases}
\mathbb{S}_{\mathrm{DP}} \triangleq \{ \sigma_{i,x} \mid i \in [N], x \in \mathcal{X} \} \cup \{ \sigma_{i}^c \mid i \in [N] \} \\
\mathbb{S}_{\mathrm{pairs,DP}} \triangleq \{ (\sigma_{i,x}, \sigma_i^c) \mid i \in [N], x \in \mathcal{X} \}
\end{cases}.
\end{equation}
The corresponding data model depends on $\pi_i \triangleq \p(h_i \in \mathcal{R})$ and independent densities $f_i(r_i)$ on $(\mathcal{X}, \mathcal{F}_X)$, yielding $\mathbb{D}_{\mathrm{DP}}$ as the set of all distributions with densities given by:
\begin{equation}
\mathbb{D}_{\mathrm{DP}} \triangleq \left\{ \p_X \mid f_X(\bm{x}) = \prod_{r_i \in \mathcal{R}} \pi_i f_i(r_i) \prod_{r_i \notin \mathcal{R}} (1 - \pi_i)  \right\} .
\end{equation}
As expected, \citep[Thm 6.1]{kifer2014pufferfish} showed that $\epsilon$-DP implies $\epsilon$-Pufferfish$(\mathbb{S}_{\mathrm{DP}}, \mathbb{S}_{\mathrm{pairs,DP}}, \mathbb{D}_{\mathrm{DP}})$. 

Considering that $Z$ is a public released statistic, we may intuitively want to analyze $Y, Z \mid X$, since we are ultimately interested in formalizing the privacy loss of $Y$ and $Z$ jointly. Doing so, unfortunately, raises many of the same problems that motivated DP in the first place. The joint distribution $Y, Z \mid X$ depends both on $Z \mid X$ and $Y \mid X, Z$. However, $Z \mid X$ can be a degenerate distribution when $Z$ is a direct function of $X$, say $Z = g(X)$. In cases like these, given $s_1, s_2 \in \mathbb{S}_{\mathrm{pairs,DP}}$, $g(X) \mid X, s_1$ or $g(X) \mid X, s_2$ may not be well-defined mathematically as conditional distributions, since we could be conditioning on measure zero sets. Therefore we cannot generically satisfy Equation \ref{eq:puff} simply by inserting $Z$ into our statistical outputs. Instead, staying true to the spirit of DP, we want to account for privacy loss due to the release of $Y$ \emph{when the way} $Y$ \emph{is released depends on} $Z$. By focusing on this unit of analysis, we are implicitly considering a temporal ordering where $Z$ is released first, and $Y$ is released according to a mechanism which depends on $Z$. By doing so, we explicitly account for how $Z$ \emph{informs} how DP mechanisms are implemented, such as based on past data releases, fitness-for-use goals, or any other mechanism setting depending on the realized confidential database.  

Here, we finally introduce our new privacy formalism. Our goal with this approach is to extend Pufferfish to accommodate $Z$ while maintaining similar desirable properties as $\epsilon$-DP. Within this setup, our target data evolution scenarios are those conditioned on the existing public results $Z$. This yields the intuition for our definition:

\begin{definition}[$\epsilon$-TP]\label{def:etp} Let $Z$ be a random variable on $(\mathcal{Z}, \mathcal{F}_Z)$ that depends on $X$. For each $z \in \mathcal{Z}$, let $\mathbb{D}_{\mathrm{DPz}}$ be a collection of conditional distributions for $X \mid Z = z$ indexed by $\theta_z \in \Theta_z$. We say that the mechanism that releases a random variable $Y$ satisfies $(\epsilon, \{\Theta_z \}_{z \in \mathcal{Z}})$-Pufferfish (which we will call $\epsilon$-TP for short when $\{\Theta_z \}_{z \in \mathcal{Z}}$ is implied) if for all $z \in \mathcal{Z}$ and $B \in \mathcal{F}_Y$, for all distributions $\theta_z \in \Theta_z$, and for all $(s_1, s_2) \in \mathbb{S}_{\mathrm{pairs,DPz}}$, where
\begin{equation}
\label{eq:puff_secrets_w_uncertainty}
\mathbb{S}_{\mathrm{pairs,DPz}} \triangleq \{ (s_1, s_2) \in \mathbb{S}_{\mathrm{pairs,DP}} \mid \p(s_i \mid \theta_z) \notin \{ 0, 1 \} \quad \forall \ i \in \{1, 2\}, \theta_z \in \mathrm{D}_{\mathrm{DPz}} \},
\end{equation}
we have
\begin{equation}
\label{eq:puffcond}
\begin{cases}
\p(Y \in B \mid s_1, \theta_z) \leq e^\epsilon \p(Y \in B \mid s_2, \theta_z) \\
\p(Y \in B \mid s_2, \theta_z) \leq e^\epsilon \p(Y \in B \mid s_1, \theta_z).
\end{cases}
\end{equation}
\end{definition}

Note that this defines a \textit{collection} of Pufferfish mechanisms, each of which is induced by a particular realization of the public information $z \in \mathcal{Z}$. To do this, our framework requires that the user specify a class of conditional distributions. This poses an important design problem: as the space of possible conditional distributions $\Theta_z$ becomes larger, the framework provides its guarantees across increasingly worse dependencies between $X$ and $Z$. However, if this space becomes too large, it becomes impossible for few if any mechanisms to satisfy this definition. This trade-off re-frames the ``no free lunch" problem of \citep{kifer2011no}, referenced in the original Pufferfish paper, strictly in terms of what dependence we allow on public information $Z$ (in Section \ref{apx:delta_z}, we demonstrate this trade-off in practice for our case study).

Next, we outline the properties of $\epsilon$-TP, highlighting where additional assumptions need to be made about relationships between $X$, $Z$, and $Y$:

\begin{lemma}[Bayesian semantics of $\epsilon$-TP]
\label{lem:etp-bayes}In the setup for Definition \ref{def:etp}, Equation \ref{eq:puffcond} equivalently implies, for all $B \in \mathcal{F}_Y$, $z \in \mathcal{Z}$, and associated $\theta \in \mathbb{D}_{\mathrm{DPz}}$ and $(s_1, s_2) \in \mathbb{S}_{\mathrm{pairs,DPz}}$:
\begin{equation}
\label{eq:puffbayescond}e^{-\epsilon} \leq \frac{\p(s_1 \mid Y \in B, \theta_z) / \p(s_2 \mid  Y \in B, \theta_z)}{\p(s_1 \mid \theta_z) / \p(s_2 \mid \theta_z)} \leq e^\epsilon.
\end{equation}
\end{lemma}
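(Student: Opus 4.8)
The plan is to reduce both conditions to a single object, the likelihood ratio $\p(Y \in B \mid s_1, \theta_z)/\p(Y \in B \mid s_2, \theta_z)$, and to observe that Equation \ref{eq:puffcond} and Equation \ref{eq:puffbayescond} each sandwich this same quantity between $e^{-\epsilon}$ and $e^{\epsilon}$. The equivalence is then immediate, because the two bracketed expressions coincide \emph{exactly} by Bayes' theorem, not merely up to inequality, so there is no slack to track and both directions follow from one algebraic identity.

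First I would rewrite the two-sided inequality in Equation \ref{eq:puffcond}. By the membership condition defining $\mathbb{S}_{\mathrm{pairs,DPz}}$ in Equation \ref{eq:puff_secrets_w_uncertainty}, we have $\p(s_i \mid \theta_z) \in (0,1)$ for $i=1,2$, so both conditional laws $\p(\cdot \mid s_i, \theta_z)$ are well defined. For any fixed $B$ with $\p(Y \in B \mid \theta_z) > 0$ the pair of inequalities in Equation \ref{eq:puffcond} is equivalent to
$$e^{-\epsilon} \leq \frac{\p(Y \in B \mid s_1, \theta_z)}{\p(Y \in B \mid s_2, \theta_z)} \leq e^{\epsilon}.$$
Next I would apply Bayes' theorem conditionally on $\theta_z$: writing $\p(s_j \mid Y \in B, \theta_z) = \p(Y \in B \mid s_j, \theta_z)\,\p(s_j \mid \theta_z)/\p(Y \in B \mid \theta_z)$ for $j = 1,2$ and taking the ratio, the common normalizing factor $\p(Y \in B \mid \theta_z)$ cancels, yielding the exact identity
$$\frac{\p(s_1 \mid Y \in B, \theta_z)/\p(s_2 \mid Y \in B, \theta_z)}{\p(s_1 \mid \theta_z)/\p(s_2 \mid \theta_z)} = \frac{\p(Y \in B \mid s_1, \theta_z)}{\p(Y \in B \mid s_2, \theta_z)}.$$
Substituting this identity into the displayed sandwich produces Equation \ref{eq:puffbayescond}, and since every manipulation is reversible, the converse direction holds by reading the same chain backwards.

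The only genuine obstacle is the degenerate bookkeeping around zero-probability events. The prior odds in the denominator of Equation \ref{eq:puffbayescond} are finite and nonzero precisely because $\p(s_j \mid \theta_z)$ is bounded away from $0$ and $1$ by the definition of $\mathbb{S}_{\mathrm{pairs,DPz}}$. The remaining care concerns sets $B$ with $\p(Y \in B \mid \theta_z) = 0$: since $\p(Y \in B \mid \theta_z) \geq \p(Y \in B \mid s_j, \theta_z)\,\p(s_j \mid \theta_z)$ and the prior mass of each $s_j$ is strictly positive, such a $B$ forces $\p(Y \in B \mid s_j, \theta_z) = 0$ for both $j$, so Equation \ref{eq:puffcond} holds vacuously while the posteriors need not be defined. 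I would therefore state the equivalence for $B$ of positive marginal probability and note the complementary case is trivial. I would also remark that the argument is symmetric under $s_1 \leftrightarrow s_2$, so establishing one side of each inequality suffices.
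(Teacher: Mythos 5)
Your proposal is correct and follows essentially the same route as the paper's proof: both apply Bayes' rule conditionally on $\theta_z$, cancel the common normalizing factor $\p(Y \in B \mid \theta_z)$ to obtain the exact identity equating the posterior-to-prior odds ratio with the likelihood ratio $\p(Y \in B \mid s_1, \theta_z)/\p(Y \in B \mid s_2, \theta_z)$, and read off the equivalence of the two sandwiches. Your additional handling of sets $B$ with $\p(Y \in B \mid \theta_z) = 0$ is a slightly more careful treatment of a degenerate case the paper passes over, but it does not change the argument.
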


\begin{lemma}[$\epsilon$-TP post-processing]
\label{lem:etp-pp}Let $h: \mathcal{Y} \times \mathcal{Z} \mapsto \mathcal{Y}^*$ be a measurable function of $Y$ and $Z$ \footnote{For the purposes of this paper, we refer to this as a ``post-processing function." However, because our results are not DP, we do not mean this in the same sense as $\epsilon$-DP post-processing.}. Then releasing $Y^* \triangleq h(Y,Z)$ is $\epsilon$-TP.
\end{lemma}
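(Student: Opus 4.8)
The plan is to reduce the claim to ordinary post-processing of a Pufferfish mechanism applied separately within each ``slice'' indexed by a realization $z \in \mathcal{Z}$. The crucial structural observation is that in Definition \ref{def:etp} every probability is taken under a data-evolution scenario $\theta_z \in \Theta_z$, which is itself a conditional distribution for $X$ given $Z = z$; consequently, conditioning on $\theta_z$ already pins down $Z = z$. Hence, although $h$ depends on both of its arguments, once we condition on $\theta_z$ the second argument is frozen at the constant $z$, and $h(\cdot, z)$ acts as an ordinary deterministic (and measurable) transformation of $Y$ alone.

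First I would fix an arbitrary $z \in \mathcal{Z}$, a distribution $\theta_z \in \Theta_z$, a discriminative pair $(s_1, s_2) \in \mathbb{S}_{\mathrm{pairs,DPz}}$, and a measurable set $B^* \in \mathcal{F}_{Y^*}$. Writing $h_z \triangleq h(\cdot, z): \mathcal{Y} \to \mathcal{Y}^*$, I would invoke the standard fact that a section of a jointly measurable map is measurable, so $h_z$ is $\mathcal{F}_Y/\mathcal{F}_{Y^*}$-measurable and $A \triangleq h_z^{-1}(B^*) \in \mathcal{F}_Y$. Because $\theta_z$ conditions on $Z = z$, the event $\{Y^* \in B^*\}$ coincides (under $\theta_z$) with $\{h_z(Y) \in B^*\} = \{Y \in A\}$, giving the identity
\begin{equation*}
\p(Y^* \in B^* \mid s_i, \theta_z) = \p(Y \in A \mid s_i, \theta_z), \qquad i = 1, 2.
\end{equation*}

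Next I would apply the $\epsilon$-TP guarantee for $Y$ itself to the set $A \in \mathcal{F}_Y$ with the same $z, \theta_z, (s_1, s_2)$: Equation \ref{eq:puffcond} yields $\p(Y \in A \mid s_1, \theta_z) \leq e^\epsilon \p(Y \in A \mid s_2, \theta_z)$ together with its symmetric counterpart. Substituting the displayed identity transfers both inequalities verbatim to $Y^*$. Since $z$, $\theta_z$, $(s_1, s_2)$, and $B^*$ were arbitrary, this establishes Equation \ref{eq:puffcond} for the mechanism releasing $Y^*$, so $Y^*$ is $\epsilon$-TP.

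The only genuine obstacle is conceptual rather than technical: one must recognize that the $Z$-dependence of $h$, which would break a naive application of DP post-processing, is entirely absorbed by the fact that $\epsilon$-TP is defined slice-by-slice conditional on $Z = z$. The one technical point worth stating carefully is the measurability of the section $h_z$ together with the well-definedness of the conditional probabilities, the latter being guaranteed by the restriction to $\mathbb{S}_{\mathrm{pairs,DPz}}$ in Equation \ref{eq:puff_secrets_w_uncertainty}, which excludes pairs with $\p(s_i \mid \theta_z) \in \{0,1\}$. I would also note that the argument is confined to deterministic $h$; a randomized post-processing extension would require an additional averaging step over independent external randomness, conditioning on which recovers the deterministic case handled above.
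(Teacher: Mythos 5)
Your proof is correct and follows essentially the same route as the paper's: fix $z$, $\theta_z$, and $(s_1,s_2)$, pull the target event back through the section $h(\cdot,z)$ to a measurable set in $\mathcal{F}_Y$, and apply the $\epsilon$-TP inequality for $Y$ to that preimage. If anything, you state the key points more carefully than the paper does (the measurability of the section and the fact that conditioning on $\theta_z$ freezes $Z=z$, absorbing the $Z$-dependence of $h$), so there is nothing to fix.
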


\begin{lemma}[Sequential composition for $\epsilon$-TP]
\label{lem:etp-seqcomp}Suppose $Y_1$ is $\epsilon_1$-TP, $Y_2$ is $\epsilon_2$-TP, and $Y_1 \independent Y_2 \mid X, Z$ (i.e. $Y_1$ and $Y_2$ are conditionally independent given $X$ and $Z$). Then the random vector $(Y_1, Y_2)$ is $(\epsilon_1+\epsilon_2)$-TP.
\end{lemma}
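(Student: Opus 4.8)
The plan is to verify Definition \ref{def:etp} for the joint release $(Y_1, Y_2)$ by fixing, throughout, a realization $z \in \mcZ$, a data-evolution scenario $\theta_z \in \mathbb{D}_{\mathrm{DPz}}$, and a discriminative pair $(s_1, s_2) \in \mathbb{S}_{\mathrm{pairs,DPz}}$. Rather than manipulate the set-wise inequalities in Equation \ref{eq:puffcond} directly (which are awkward to propagate through a $\pi$--$\lambda$ reduction to product sets, since the family of sets satisfying such an inequality is not a $\lambda$-system), I would first pass to the equivalent Radon--Nikodym form: $\epsilon$-TP for a release $Y$ is equivalent to the statement that, for every such $z, \theta_z, (s_1,s_2)$, the density $p_1$ of $Y \mid s_1, \theta_z$ and the density $p_2$ of $Y \mid s_2, \theta_z$ (with respect to a common dominating measure) satisfy $e^{-\epsilon} \le p_1/p_2 \le e^\epsilon$ almost everywhere. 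It then suffices to bound the joint density ratio of $(Y_1, Y_2)$ by $e^{\epsilon_1+\epsilon_2}$ pointwise.

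Next I would factor the joint conditional density by the chain rule, $p_i(y_1, y_2) = p_i^{Y_1}(y_1)\, p_i^{Y_2 \mid Y_1}(y_2 \mid y_1)$ for $i=1,2$, so that the joint ratio splits as
\[
\frac{p_1(y_1,y_2)}{p_2(y_1,y_2)} = \frac{p_1^{Y_1}(y_1)}{p_2^{Y_1}(y_1)} \cdot \frac{p_1^{Y_2\mid Y_1}(y_2 \mid y_1)}{p_2^{Y_2\mid Y_1}(y_2 \mid y_1)}.
\]
The first factor is exactly the marginal density ratio of $Y_1 \mid s_i, \theta_z$, which $\epsilon_1$-TP bounds by $e^{\epsilon_1}$. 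The substance is in the second factor, and this is where $Y_1 \independent Y_2 \mid X, Z$ enters. Expanding the conditional law of $Y_2$ through $X$ and using conditional independence to drop the dependence of the mechanism kernel on $y_1$ gives $p_i^{Y_2 \mid Y_1}(y_2 \mid y_1) = \int \p(Y_2 \in dy_2 \mid x, z)\, \nu_i^{y_1}(dx)$, where $\nu_i^{y_1} \triangleq \p(X \in \cdot \mid Y_1 = y_1, s_i, \theta_z)$. The key observation is that, by Bayes' rule, $\nu_i^{y_1}$ equals the conditional-on-$s_i$ law of $X$ under the \emph{single} posterior-updated scenario $\theta_z^{y_1}(dx) \propto \p(Y_1 = y_1 \mid x, z)\,\theta_z(dx)$; that is, $\nu_i^{y_1} = \p(X \in \cdot \mid s_i, \theta_z^{y_1})$. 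Hence the second factor is precisely the marginal density ratio of $Y_2$ under the scenario $\theta_z^{y_1}$, and $\epsilon_2$-TP bounds it by $e^{\epsilon_2}$. Multiplying yields $p_1/p_2 \le e^{\epsilon_1+\epsilon_2}$; interchanging the roles of $s_1$ and $s_2$ gives the reverse bound, establishing $(\epsilon_1+\epsilon_2)$-TP.

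I expect the main obstacle to be the legitimacy of invoking $\epsilon_2$-TP at the updated scenario $\theta_z^{y_1}$: this step requires $\theta_z^{y_1} \in \mathbb{D}_{\mathrm{DPz}}$, i.e.\ the admissible family $\Theta_z$ must be closed under Bayesian updating by $Y_1$ (which holds in particular for the conservative choice where $\Theta_z$ is the set of all conditional laws of $X \mid Z = z$), and it requires that $(s_1, s_2)$ remain a valid discriminative pair under $\theta_z^{y_1}$, i.e.\ $\p(s_i \mid \theta_z^{y_1}) \notin \{0,1\}$. The latter I would obtain from the fact that $\epsilon_1$-TP forces the supports of $Y_1 \mid s_1, \theta_z$ and $Y_1 \mid s_2, \theta_z$ to coincide (their density ratio is bounded away from $0$ and $\infty$), so for $y_1$ in this common support both $\p(s_1 \mid \theta_z^{y_1})$ and $\p(s_2 \mid \theta_z^{y_1})$ remain strictly between $0$ and $1$. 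The remaining work is purely measure-theoretic bookkeeping: existence of the regular conditional distributions for $X \mid Y_1, s_i, \theta_z$ and a consistent choice of dominating measures so that both the chain-rule factorization and the density-ratio reformulation are rigorously justified.
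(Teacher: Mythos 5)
Your route is genuinely different from the paper's, and more careful. The paper's entire proof is the computation
\begin{align*}
\p(Y \in B \mid s_1, \theta_z) &= \p(Y_1 \in B_1 \mid s_1, \theta_z)\,\p(Y_2 \in B_2 \mid s_1, \theta_z) \\
&\leq \left(e^{\epsilon_1}\,\p(Y_1 \in B_1 \mid s_2, \theta_z)\right)\left(e^{\epsilon_2}\,\p(Y_2 \in B_2 \mid s_2, \theta_z)\right) = e^{\epsilon_1+\epsilon_2}\,\p(Y \in B \mid s_2, \theta_z),
\end{align*}
asserted ``by the conditional independence assumption'' for suitable $B_1, B_2$. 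That is, the paper (i) tacitly restricts to product sets $B = B_1 \times B_2$ without addressing the extension to general $B \in \mathcal{F}_Y$ --- the very non-$\lambda$-system obstruction you point out and correctly evade by passing to the a.e.\ density-ratio formulation --- and (ii), more substantively, upgrades the hypothesis $Y_1 \independent Y_2 \mid X, Z$ to independence of $Y_1$ and $Y_2$ conditional on $(s_i, \theta_z)$. The latter does not follow: conditioning on a secret leaves residual randomness in $X$, so the joint law of $(Y_1, Y_2)$ given $(s_i, \theta_z)$ is a mixture over $x$ of product kernels, not a product of the marginal mixtures. Your chain-rule decomposition isolates exactly what is needed: the factor $p_i^{Y_2 \mid Y_1}$ is the $Y_2$-marginal under the posterior-updated scenario $\theta_z^{y_1}$, and your identification $\nu_i^{y_1} = \p(X \in \cdot \mid s_i, \theta_z^{y_1})$ is correct (it uses that secrets are events in $\mathcal{F}_X$, so the kernel of $Y_1$ given $(x, z)$ does not further depend on $s_i$), as is your support argument guaranteeing $\p(s_i \mid \theta_z^{y_1}) \notin \{0,1\}$ on the common support of $Y_1 \mid s_1$ and $Y_1 \mid s_2$.

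The caveat you flag, however, is real and cannot be discharged from the lemma's stated hypotheses: invoking $\epsilon_2$-TP at $\theta_z^{y_1}$ requires $\theta_z^{y_1} \in \mathbb{D}_{\mathrm{DPz}}$, i.e.\ closure of the scenario class under Bayesian updating by $Y_1$. The lemma quantifies over arbitrary $\{\Theta_z\}$, and once the adversary's posterior on $X$ after observing $Y_1$ leaves the protected class, the $\epsilon_2$ bound on the conditional ratio of $Y_2$ simply does not apply --- this is the standard way Pufferfish-style composition fails. So your proposal proves a corrected statement (sequential composition under the additional assumption that $\mathbb{D}_{\mathrm{DPz}}$ is closed under conditioning on $Y_1$, automatic for the maximal class of all conditional laws of $X \mid Z = z$), whereas the paper's three-line proof of the unqualified statement silently assumes a product structure that its hypotheses do not supply and never surfaces this requirement. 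In short: your argument is correct modulo an assumption you explicitly name; the paper's shorter argument needs that assumption (or something stronger) at the step it presents as immediate.
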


\begin{lemma}[Parallel composition for $\epsilon$-TP]
\label{lem:etp-parcomp}Consider a disjoint partition $X = \bigcup_{b \in B} X_b$ where $Y_1, \dots Y_B$ each satisfy $\epsilon$-TP and $Y_b \mid X, Z \sim Y_b \mid X_b, Z$ for all $b \in [B]$. Then $(Y_1, \dots, Y_B)$ is $\epsilon$-TP.
\end{lemma}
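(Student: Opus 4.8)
The plan is to mirror the classical differential-privacy proof of parallel composition, transported into the conditional/Pufferfish language of Definition \ref{def:etp}. Fix a realization $z \in \mathcal{Z}$, a data-evolution scenario $\theta_z \in \mathbb{D}_{\mathrm{DPz}}$, and a discriminative pair $(s_1, s_2) \in \mathbb{S}_{\mathrm{pairs,DPz}}$. The first step is the observation that drives the whole argument: every pair in $\mathbb{S}_{\mathrm{pairs,DP}}$ has the form $(\sigma_{i,x}, \sigma_i^c)$ and therefore concerns a single record $r_i$; since $X = \bigcup_{b} X_b$ is a disjoint partition, $r_i$ lies in exactly one block, which I will call $X_{b^*}$. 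I would then show that conditioning on $s_1$ versus $s_2$ moves only the output of the mechanism attached to that block.

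Concretely, I would condition on the full database and use the locality hypothesis $Y_b \mid X, Z \sim Y_b \mid X_b, Z$ together with conditional independence of $Y_1, \dots, Y_B$ given $(X, Z)$ (the natural parallel-mechanism assumption, as is made explicit for sequential composition in Lemma \ref{lem:etp-seqcomp}) to factor the conditional law of $\bm{Y} = (Y_1, \dots, Y_B)$ given $(X, z)$ as a product measure $\bigotimes_b \mu_b(\cdot \mid X_b, z)$. For $\ell \in \{1,2\}$ this gives
\begin{equation}
\p(\bm{Y} \in B^* \mid s_\ell, \theta_z) = \int \Big( \bigotimes_b \mu_b(\cdot \mid x_b, z) \Big)(B^*) \, d\p(X = x \mid s_\ell, \theta_z).
\end{equation}
The key structural claim is that, under $\theta_z$, the secret about $r_i$ leaves the joint law of the non-$b^*$ blocks $X_{-b^*}$ invariant and independent of $X_{b^*}$, so that $\p(X \mid s_\ell, \theta_z)$ factors as $\rho_\ell \otimes \nu$ with $\nu$ common to $\ell = 1, 2$. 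Granting this, a Fubini argument on the product measure reduces the ratio of the two displayed integrals to the ratio of the corresponding integrals over $X_{b^*}$ alone, evaluated on the sections of $B^*$; that quantity is exactly $\p(Y_{b^*} \in \cdot \mid s_\ell, \theta_z)$ integrated against the section probabilities. Applying the $\epsilon$-TP guarantee of $Y_{b^*}$ to each section and then integrating over $\nu$ yields both inequalities of Equation \ref{eq:puffcond} for $\bm{Y}$ with the \emph{same} $\epsilon$, completing the proof.

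The main obstacle is justifying the structural claim that only block $b^*$ responds to the secret. This is automatic in ordinary DP, where neighboring databases are fixed and literally agree off the modified record, but here it must be read off the data-evolution scenario, and it is precisely the step that would fail (degrading $\epsilon$ to $B\epsilon$, i.e.\ collapsing into sequential composition) were the blocks dependent under $\theta_z$. I would therefore isolate and verify that the discriminative events $\sigma_{i,x}, \sigma_i^c$ for $r_i \in X_{b^*}$ are independent of $X_{-b^*}$ under every admissible $\theta_z$, using the disjointness of the partition together with the record-wise product structure of $\mathbb{D}_{\mathrm{DPz}}$. A secondary, purely technical point is the passage from product sets $A \times C$ to arbitrary measurable $B^* \in \bigotimes_b \mathcal{F}_{Y_b}$, which I would handle by the section/monotone-class argument indicated above, so that the likelihood-ratio bound extends from rectangles to the full product $\sigma$-algebra.
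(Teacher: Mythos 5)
Your overall route is the same as the paper's: reduce the discriminative pair to the single block containing the record it names, apply that block's $\epsilon$-TP guarantee, and lift the bound to the joint release (the paper does this via a case analysis for two blocks plus induction on $B$; you do it via a product-measure/Fubini argument plus a monotone-class extension from rectangles). Two of your choices are actually cleaner than the published proof: your observation that both members of any pair $(\sigma_{i,x},\sigma_i^c)\in\mathbb{S}_{\mathrm{pairs,DP}}$ concern the \emph{same} record, hence a single block $X_{b^*}$, is correct, whereas the paper's case split contemplating $s_1$ and $s_2$ in \emph{different} blocks is vacuous for pairs of this form; and making the conditional independence of $Y_1,\dots,Y_B$ given $(X,Z)$ explicit is an improvement, since the lemma statement only constrains each marginal $Y_b \mid X, Z$ and the paper uses this factorization tacitly.

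There is, however, a genuine gap in your plan for the pivotal step. Your ``structural claim'' --- that under every admissible $\theta_z$ the law of $X_{-b^*}$ is invariant to the secret and independent of $X_{b^*}$, so that $\p(X\mid s_\ell,\theta_z)$ factors as $\rho_\ell\otimes\nu$ --- cannot be ``read off the record-wise product structure of $\mathbb{D}_{\mathrm{DPz}}$,'' because $\mathbb{D}_{\mathrm{DPz}}$ has no such structure: Definition \ref{def:etp} admits arbitrary conditional distributions for $X \mid Z = z$, and conditioning on $Z$ is exactly the operation that destroys the product form of $\mathbb{D}_{\mathrm{DP}}$. Concretely, take two binary records in singleton blocks and let $Z$ be the parity of $X_1 + X_2$: given $z$, each record determines the other, so a secret about $r_1$ shifts the conditional law of $Y_2$ by a factor up to $e^{\epsilon}$ (which is all that $Y_2$'s own $\epsilon$-TP guarantee controls for a pair about $r_1$), and the joint release then only satisfies Equation \ref{eq:puffcond} at level $2\epsilon$ --- precisely the collapse to sequential composition you flag. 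So cross-block invariance under every $\theta_z$ is an additional hypothesis on $\{\Theta_z\}_{z\in\mathcal{Z}}$ that must be imposed, not derived from disjointness of the partition. To be fair, the paper's own proof silently assumes the same invariance (its ``trivially satisfied'' same-block case is trivial only under it), so your attempt is, if anything, more candid about where the difficulty sits; but as a standalone proof the key step is unpatched, and the patch is an assumption, not an argument.
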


\subsection{Mechanisms that satisfy \texorpdfstring{$\epsilon$}{epsilon}-TP}

At first glance, $\epsilon$-TP may seem substantially harder to use than $\epsilon$-DP. In this section, we will alleviate those fears by discussing how existing tools from $\epsilon$-DP can be easily repurposed into $\epsilon$-TP. First, we consider the case when the set of plausible databases for $\mathcal{X}^n$ is restricted by public information. This captures many common forms of public information; examples include public tabular summaries and information gleaned from timing attacks (where only a subset of possible databases induce mechanisms that have a particular observed run-time).

\begin{lemma}
\label{lem:etp-cond}In the absence of any public information, let $Y$ be an $\epsilon$-DP release over $\mathcal{Y} = \mathcal{X}^n$., i.e. the marginal distribution of $Y$ is that of a randomized algorithm satisfying $\epsilon$-DP. Next, consider $Z \triangleq \ind{X \in \mathcal{X}^{*n}}$. Then $Y \mid Z = 1$, i.e. $Y \mid Y \in \mathcal{X}^{*n}$, is $(2\epsilon, \Theta_{\mathrm{DPz}})$-TP where:
$$
\Theta_{\mathrm{DPz}} = \{ f(x) \propto f^*(x) \ind{x \in \mathcal{X}^*} \mid f^* \in \mathbb{D}_{\mathrm{DP}} \}
$$
In words, $\mathbb{D}_{\mathrm{DPz}}$ is the same as $\mathbb{D}_{\mathrm{DP}}$ with each density renormalized over only those databases conforming to $Z$.
\end{lemma}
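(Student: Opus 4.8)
The plan is to disentangle the two operations that the statement bundles together: the restriction of the data evolution scenarios to $\mathcal{X}^{*n}$, and the conditioning of the release on the event $\{Y \in \mathcal{X}^{*n}\}$. I read ``$Y \mid Y \in \mathcal{X}^{*n}$'' as the mechanism $M'$ whose law given $X = x$ is that of the underlying $\epsilon$-DP algorithm $M(x)$ conditioned on $\{M(x) \in \mathcal{X}^{*n}\}$. My strategy is then threefold: (1) show $M'$ is $2\epsilon$-DP as an ordinary randomized algorithm; (2) observe that $\Theta_{\mathrm{DPz}}$ is a subfamily of $\mathbb{D}_{\mathrm{DP}}$; and (3) invoke \citep[Thm 6.1]{kifer2014pufferfish} to turn the $2\epsilon$-DP guarantee of $M'$ into $2\epsilon$-Pufferfish over $\mathbb{D}_{\mathrm{DP}}$, which specializes to $(2\epsilon, \{\Theta_z\}_{z\in\mathcal{Z}})$-TP. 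The factor of two is produced entirely in step (1); the data-model restriction contributes nothing extra, precisely because $\mathcal{X}^{*n}$ is a product set.

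For step (1), I would fix neighbors $x, x'$ and a measurable $B \subseteq \mathcal{X}^{*n}$ with $\p(M(x) \in \mathcal{X}^{*n}) > 0$, and write the conditional law explicitly:
\begin{equation*}
\p(M'(x) \in B) = \frac{\p(M(x) \in B)}{\p(M(x) \in \mathcal{X}^{*n})} \leq \frac{e^\epsilon \, \p(M(x') \in B)}{e^{-\epsilon}\, \p(M(x') \in \mathcal{X}^{*n})} = e^{2\epsilon}\, \p(M'(x') \in B),
\end{equation*}
where the numerator uses $\epsilon$-DP of $M$ on $B$ and the denominator uses $\epsilon$-DP of $M$ on the event $\mathcal{X}^{*n}$. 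Hence $M'$ satisfies $2\epsilon$-DP.

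For steps (2)–(3), note that each $\theta_z \in \Theta_{\mathrm{DPz}}$ has density proportional to $f^*(x)\prod_{i} \ind{x_i \in \mathcal{X}^*}$ with $f^* \in \mathbb{D}_{\mathrm{DP}}$; since the indicator factorizes across records, renormalizing preserves the product form $\prod_{r_i \in \mathcal{R}} \tilde\pi_i \tilde f_i(r_i)\prod_{r_i \notin \mathcal{R}}(1-\tilde\pi_i)$, so $\theta_z \in \mathbb{D}_{\mathrm{DP}}$ and $\Theta_{\mathrm{DPz}} \subseteq \mathbb{D}_{\mathrm{DP}}$. Applying \citep[Thm 6.1]{kifer2014pufferfish} to the $2\epsilon$-DP algorithm $M'$ yields $2\epsilon$-Pufferfish$(\mathbb{S}_{\mathrm{DP}}, \mathbb{S}_{\mathrm{pairs,DP}}, \mathbb{D}_{\mathrm{DP}})$. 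Specializing the data distributions to $\Theta_{\mathrm{DPz}}$ and the discriminative pairs to $\mathbb{S}_{\mathrm{pairs,DPz}} \subseteq \mathbb{S}_{\mathrm{pairs,DP}}$ gives exactly Equation \ref{eq:puffcond} with privacy parameter $2\epsilon$.

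The hard part is conceptual rather than computational: locating where the factor of two actually comes from. One must notice that the product structure of $\mathcal{X}^{*n}$ keeps the conditioned data model inside $\mathbb{D}_{\mathrm{DP}}$, so that under any discriminative pair $(\sigma_{i,x_0}, \sigma_i^c)$ the conditional law of the remaining records is unchanged between $s_1$ and $s_2$; restricting the evolution scenarios alone would therefore cost only a factor of one. The doubling arises \emph{solely} from conditioning the output on $\mathcal{X}^{*n}$, where the numerator and the normalizing denominator each absorb one factor of $e^\epsilon$. (Were $\mathcal{X}^{*n}$ a non-product subset of $\mathcal{X}^n$, the renormalization would correlate the records and the bound would degrade beyond $2\epsilon$, so the product interpretation of $\mathcal{X}^{*n}$ is essential.) A secondary technical point is the well-definedness of $M'$, which requires $\p(M(x)\in\mathcal{X}^{*n})>0$ on the relevant inputs; this is the implicit support condition under which both the conditional release and the Pufferfish ratios in Equation \ref{eq:puffcond} are meaningful.
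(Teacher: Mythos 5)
Your proof is correct, and it reaches the paper's conclusion by a genuinely different route. The paper's own proof (modeled on Theorem 2.1 of \citealp{gong2020congenial}) works directly at the level of the Pufferfish quantities: it factorizes the conditional ratio as
\begin{equation*}
\frac{\p(Y \in B \mid s_1, Y \in \mathcal{X}^{*n})}{\p(Y \in B \mid s_2, Y \in \mathcal{X}^{*n})}
= \frac{\p(Y \in B \cap \mathcal{X}^{*n} \mid s_1)}{\p(Y \in B \cap \mathcal{X}^{*n} \mid s_2)}
\cdot \frac{\p(Y \in \mathcal{X}^{*n} \mid s_2)}{\p(Y \in \mathcal{X}^{*n} \mid s_1)}
\leq e^\epsilon \cdot e^\epsilon,
\end{equation*}
bounding each factor by the Pufferfish guarantee implied by $\epsilon$-DP; the factor of two comes from the numerator event and the normalizing event each absorbing one $e^\epsilon$, exactly as in your accounting. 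You instead condition the mechanism per input, show the conditioned algorithm $M'$ is $2\epsilon$-DP as a marginal randomized algorithm, and then transfer back through \citep[Thm 6.1]{kifer2014pufferfish} together with the closure observation $\Theta_{\mathrm{DPz}} \subseteq \mathbb{D}_{\mathrm{DP}}$. Two comparative remarks. First, the two readings of ``$Y \mid Y \in \mathcal{X}^{*n}$'' are not literally identical --- you mix over $X \mid s_i, \theta_z$ \emph{after} conditioning each $M(x)$ on the output event, while the paper conditions the joint law --- but the lemma's phrasing is loose enough that both are defensible, and both yield $2\epsilon$. Second, your explicit step (2) is a real contribution: the paper's terse proof never verifies that the renormalized family stays inside $\mathbb{D}_{\mathrm{DP}}$ (it sidesteps the renormalized data model entirely by conditioning on the output event under an unrenormalized $\theta$), yet the lemma's statement is phrased in terms of $\Theta_{\mathrm{DPz}}$, so your product-structure argument bridges a gap the paper leaves implicit. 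The trade-off is that your transfer genuinely requires $\mathcal{X}^{*n}$ to be a product set, whereas the paper's factorization does not use product structure at the bounding step; this matters for the paper's ``without loss of generality'' handling of $z = 0$, where the conditioning set becomes the complement $\mathcal{X}^n \setminus \mathcal{X}^{*n}$, a non-product set for which your step (2) fails --- so your argument as written covers only the $z = 1$ branch. Relatedly, your parenthetical claim that the bound ``would degrade beyond $2\epsilon$'' for non-product sets overstates what you have shown: what breaks is your transfer through $\mathbb{D}_{\mathrm{DP}}$, not necessarily the $2\epsilon$ bound itself, as the paper's factorization indicates. Your positivity condition $\p(M(x) \in \mathcal{X}^{*n}) > 0$ is a legitimate technical point that the paper also leaves implicit.
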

                 
Next, we need to define a generic class of mechanisms that satisfies $\epsilon$-TP. In one dimension given some output function $h: \mathcal{Y} \mapsto \mathbb{R}$, \citep{song2017pufferfish} defines a sensitivity analogue:
$$
\Delta \triangleq \sup_{s_1, s_2 \in \mathbb{S}_{\mathrm{pairs}}} \sup_{\theta \in \Theta} \ \ W_\infty \left( \p(h(\cdot) \mid s_1, \theta), \p(h(\cdot) \mid s_2, \theta) \right), 
$$
where $W_\infty$ is the Wasserstein-$\infty$ metric:
$$
W_\infty(\mu, \nu) \triangleq \inf_{\gamma \in \Gamma(\mu, \nu)} \mathrm{ess sup}_{\gamma} \norm{\mu - \nu}_1 . 
$$
In the above definition, $\mu$ and $\nu$ are measures on $(\mathcal{Y}, \mathcal{F}_Y)$, and $\gamma$ is the set of all joint distributions on $(\mathcal{Y} \times \mathcal{Y}, \mathcal{F}_Y \times \mathcal{F}_Y)$. In Theorem \ref{thm:wassexp}, we propose a generalization of the Wasserstein mechanism \citep{song2017pufferfish} for arbitrary loss functions that relaxes the restrictions on the output space and extends $\mathcal{X}$ to be an arbitrary metric space.

\begin{theorem}[Wasserstein Exponential Mechanism]
\label{thm:wassexp}
Let $(\mathcal{X}, d)$ be a metric space and $L_x: \mathcal{X} \times \mathcal{Y} \mapsto [0, \infty]$ be a loss function. Fix $z \in \mathcal{Z}$. Define:
$$
\Delta_z \triangleq \sup_{\theta_z \in \Theta_z} \sup_{(s_1, s_2)\in \mathbb{S}_{\mathrm{pairs, DPz}} } W_{\infty} \left(\p(\cdot \mid \theta_z, s_1), \p(\cdot \mid \theta_z, s_2) \right)
$$ 
and 
$$
\sigma(\Delta_z) = \sup \left\{ |L_x(y) - L_{x'}(y)| \mid x,x' \in \mathcal{X}^n, d(x, x') \leq \Delta_z  \right\}.
$$
Then releasing a sample $Y$ with density given by
$$
f_X(y) \propto \exp\left(-\frac{\epsilon L_X(y)}{2\sigma(\Delta_z)} \right),
$$
with respect to base measure $\nu_Z$ which may depend on $Z$, satisfies $\epsilon$-TP.
\end{theorem}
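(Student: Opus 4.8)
The plan is to verify the two inequalities in Equation \ref{eq:puffcond} directly by reducing them to a pointwise likelihood-ratio bound on the per-database densities, exactly as in the classical exponential-mechanism argument, and then transporting that bound through an optimal $W_\infty$ coupling. Fix $z$, a scenario $\theta_z \in \Theta_z$, a discriminative pair $(s_1, s_2) \in \mathbb{S}_{\mathrm{pairs,DPz}}$, and $B \in \mathcal{F}_Y$. Since the mechanism's output $Y$ has density $f_x$ depending on the database only through $X = x$ (with $z$, hence $\nu_z$, fixed), conditioning on a secret and scenario produces a mixture
$$
\p(Y \in B \mid s_i, \theta_z) = \int_{\mathcal{X}^n} f_x(B) \, d\mu_i(x), \qquad \mu_i \triangleq \p(X \in \cdot \mid s_i, \theta_z),
$$
where $f_x(B) = \int_B f_x(y)\, d\nu_z(y)$. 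The goal becomes bounding the mixture for $\mu_1$ by $e^\epsilon$ times the one for $\mu_2$.

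First I would establish the pointwise ratio bound. For any two databases $x, x'$ with $d(x,x') \leq \Delta_z$, the definition of $\sigma(\Delta_z)$ gives $|L_x(y) - L_{x'}(y)| \leq \sigma(\Delta_z)$ for all $y$. Writing $C_x = \int \exp(-\epsilon L_x(y')/(2\sigma(\Delta_z)))\, d\nu_z(y')$ for the normalizer, the density ratio factors as
$$
\frac{f_x(y)}{f_{x'}(y)} = \exp\!\left( \frac{\epsilon (L_{x'}(y) - L_x(y))}{2\sigma(\Delta_z)} \right) \cdot \frac{C_{x'}}{C_x},
$$
and each factor is at most $e^{\epsilon/2}$: the exponential because $|L_{x'}(y) - L_x(y)| \leq \sigma(\Delta_z)$, and the normalizer ratio because the same bound inside the integral defining $C_{x'}$ pulls out a factor $e^{\epsilon/2}$ relative to $C_x$. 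Hence $f_x(y) \leq e^\epsilon f_{x'}(y)$ pointwise, and integrating over $B$ yields $f_x(B) \leq e^\epsilon f_{x'}(B)$ whenever $d(x,x') \leq \Delta_z$.

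Next I would transport this bound from individual databases to the two mixtures. By the definition of $\Delta_z$ as a supremum of $W_\infty$ distances, we have $W_\infty(\mu_1, \mu_2) \leq \Delta_z$, so there is a coupling $\gamma \in \Gamma(\mu_1, \mu_2)$ with $d(x,x') \leq \Delta_z$ for $\gamma$-almost every pair $(x,x')$. Rewriting both mixtures against this coupling,
$$
\p(Y \in B \mid s_1, \theta_z) = \int f_x(B)\, d\gamma(x,x') \leq e^\epsilon \int f_{x'}(B)\, d\gamma(x,x') = e^\epsilon\, \p(Y \in B \mid s_2, \theta_z),
$$
where the inequality applies the pointwise bound $\gamma$-almost everywhere. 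Swapping the roles of $s_1$ and $s_2$ gives the reverse inequality, establishing Equation \ref{eq:puffcond}; since $z$, $\theta_z$, and $(s_1,s_2)$ were arbitrary, the mechanism is $\epsilon$-TP.

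The main obstacle is the measure-theoretic justification of the coupling step. The $W_\infty$ infimum need not be attained by a single coupling, so I would either invoke existence of optimal $W_\infty$ transport plans on Polish spaces (so that some $\gamma$ realizes $d \leq \Delta_z$ almost everywhere), or work with $\eta$-approximate couplings achieving essential supremum $d \leq \Delta_z + \eta$, apply the pointwise bound with $\sigma(\Delta_z + \eta)$, and pass $\eta \to 0$ using monotonicity and appropriate continuity of $\sigma$. A secondary technical point is ensuring $\sigma(\Delta_z) < \infty$, so that the densities and normalizers are well-defined and the ratio factorization is valid; when $L$ can equal $+\infty$, finiteness of $\sigma(\Delta_z)$ precludes any coupled pair straddling the finite/infinite boundary, so the pointwise argument still goes through.
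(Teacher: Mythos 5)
Your proof is correct and follows essentially the same route as the paper's: represent $\p(Y \in B \mid s_i, \theta_z)$ as a mixture of the per-database mechanism over $\p(X \in \cdot \mid s_i, \theta_z)$, transport through a $W_\infty$-optimal coupling so that $d(x,x') \leq \Delta_z$ almost everywhere, and apply the classical exponential-mechanism ratio bound with one factor $e^{\epsilon/2}$ from the exponential and one from the normalizer. If anything, you are more careful than the paper, which simply posits a coupling $\gamma^*$ ``achieving'' the Wasserstein bound and elides the normalizing constants in its displayed integrals, whereas you make the $C_{x'}/C_x \leq e^{\epsilon/2}$ step explicit and flag the attainment/$\eta$-approximation issue.
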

\begin{proof}
(Sketch) for the optimal transport solution $\gamma^*$ which achieves $\Delta_Z$, $\sigma(\Delta_Z)$ bounds how much the loss function can change in a ball of radius $\Delta_Z$ around any pair of two conditional distributions given both the public $Z$ and one of the two secrets $s_1$ and $s_2$, respectively. This then satisfies $\epsilon$-TP by construction for any choice of $\nu_Z$.
\end{proof}

If we want to place additional restrictions on our loss function and public information, we can ensure that the errors introduced due to privacy are asymptotically negligible, i.e.,  $O_P(n^{-1})$, relative to the errors from sampling, i.e., $O_P(n^{-1/2})$. This can be seen by extending the $K$-norm gradient mechanism \citep{reimherr2019kng} to our setting. 

\begin{theorem}[Wasserstein K-norm Gradient Mechanism] 
\label{thm:wasskng}In the same setting as Theorem \ref{thm:wassexp}, suppose $\mathcal{Y} \subseteq \mathbb{R}^d$ is convex, $\norm{\cdot}_K$ is a $K$-norm, and there exists a function $\sigma_y(\Delta_z): \mathcal{Y} \mapsto \mathbb{R}^+$ where, whenever $d(X, X') \leq \Delta_Z$, $\norm{\nabla L_X(y) - \nabla L_{X'}(y)}_K \leq \sigma_y(\Delta_z)$. Then releasing a sample $Y$ with density given by:
$$
f_X(y) \propto \exp\left(-\frac{\epsilon \norm{\nabla L_X(y)}_K }{2\sigma_y(\Delta_z)} \right),
$$
with respect to base measure $\nu_Z$ which may depend on $Z$, satisfies $\epsilon$-TP. 
\end{theorem}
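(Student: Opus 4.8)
The plan is to follow the same template as the sketch for Theorem \ref{thm:wassexp}: reduce the two inequalities in Equation \ref{eq:puffcond} to a pointwise bound on the density ratio $f_x(y)/f_{x'}(y)$ over pairs of databases that are close in the metric $d$, and then lift that bound through an optimal transport coupling. First I would fix $z \in \mcZ$, $B \in \mcFy$, a scenario $\theta_z \in \Theta_z$, and a discriminative pair $(s_1, s_2) \in \mathbb{S}_{\mathrm{pairs,DPz}}$, and write $\mu_i \triangleq \p(X \in \cdot \mid \theta_z, s_i)$ for the induced conditional laws of the confidential database on $\mcX^n$. Since the mechanism samples $Y$ using only the realized $X$ (the secret and scenario enter solely through the law of $X$), we have $\p(Y \in B \mid \theta_z, s_i) = \int_{\mcX^n} \p(Y \in B \mid X = x)\, d\mu_i(x)$, so it suffices to compare these two marginalizations.

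Next I would invoke the definition of $\Delta_z$: because it is a supremum of $W_\infty(\mu_1, \mu_2)$ over all admissible scenarios and discriminative pairs, in particular $W_\infty(\mu_1, \mu_2) \leq \Delta_z$, so there is a coupling $\gamma^*$ of $\mu_1$ and $\mu_2$ under which $d(X, X') \leq \Delta_z$ holds $\gamma^*$-almost surely for $(X, X') \sim \gamma^*$. (One technical point to flag: $W_\infty$ is an infimum, so strictly one works with couplings achieving $\Delta_z + \delta$ and lets $\delta \to 0$, or appeals to attainment of the optimal $W_\infty$ coupling; this is the same caveat carried by the sketch of Theorem \ref{thm:wassexp}.)

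The heart of the argument is the pointwise density-ratio bound, which is where the $K$-norm gradient structure enters. Fix any $x, x'$ with $d(x, x') \leq \Delta_z$ and any $y \in \mcY$, and write the normalizers $c_x \triangleq \int_{\mcY} \exp(-\epsilon \norm{\nabla L_x(y)}_K / (2\sigma_y(\Delta_z)))\, d\nu_Z(y)$, so that $f_x(y)/f_{x'}(y) = (c_{x'}/c_x)\exp\bigl(\tfrac{\epsilon}{2\sigma_y(\Delta_z)}(\norm{\nabla L_{x'}(y)}_K - \norm{\nabla L_x(y)}_K)\bigr)$. The new ingredient relative to the exponential-mechanism proof is the reverse triangle inequality for the $K$-norm: $\bigl|\norm{\nabla L_x(y)}_K - \norm{\nabla L_{x'}(y)}_K\bigr| \leq \norm{\nabla L_x(y) - \nabla L_{x'}(y)}_K \leq \sigma_y(\Delta_z)$, where the last step is exactly the hypothesized gradient bound. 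Crucially the $\sigma_y(\Delta_z)$ here matches the one normalizing the exponent, so the $y$-dependence cancels and the exponential factor is at most $e^{\epsilon/2}$ for \emph{every} $y$; applying this same per-$y$ bound inside the integrals defining $c_x$ and $c_{x'}$ gives $c_{x'}/c_x \leq e^{\epsilon/2}$, and multiplying yields $f_x(y)/f_{x'}(y) \leq e^{\epsilon}$ (the symmetric bound holds by swapping $x, x'$).

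Finally I would integrate the pointwise bound back through $\gamma^*$: for each $(x,x')$ in its support, $\p(Y \in B \mid X = x) = \int_B f_x(y)\,d\nu_Z(y) \leq e^{\epsilon}\int_B f_{x'}(y)\,d\nu_Z(y) = e^{\epsilon}\,\p(Y \in B \mid X = x')$, and integrating against $\gamma^*$ (whose marginals are $\mu_1, \mu_2$) gives $\p(Y \in B \mid \theta_z, s_1) \leq e^{\epsilon}\,\p(Y \in B \mid \theta_z, s_2)$; the symmetric pointwise bound gives the reverse inequality, establishing Equation \ref{eq:puffcond} and hence $\epsilon$-TP. I expect the main obstacle to be the bookkeeping around the $y$-dependent normalizer $\sigma_y(\Delta_z)$ — confirming that because exponent and normalizing constant are scaled by the \emph{same} function of $y$, the $e^{\epsilon/2}$ cancellation survives both pointwise and after integration — together with the measure-theoretic justification for using the optimal $W_\infty$ coupling; the convexity of $\mcY$ and finiteness of the $K$-norm ensure the density is well-defined, so these are the only genuinely delicate points.
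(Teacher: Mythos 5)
Your proposal is correct and in substance identical to the paper's argument: the paper proves this theorem as a two-line reduction to Theorem \ref{thm:wassexp} by defining the rescaled loss $L_X^*(y) \triangleq \norm{\nabla L_X(y)}_K / \sigma_y(\Delta_z)$ and observing that it has sensitivity $\sigma(\Delta_z) = 1$, which is precisely your reverse-triangle-inequality step $\bigl|\norm{\nabla L_x(y)}_K - \norm{\nabla L_{x'}(y)}_K\bigr| \leq \norm{\nabla L_x(y) - \nabla L_{x'}(y)}_K \leq \sigma_y(\Delta_z)$ combined with the cancellation of the common $y$-dependent scale. You merely inline the coupling and normalizer bookkeeping of Theorem \ref{thm:wassexp} instead of invoking it as a black box, so the two proofs differ only in packaging.
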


\begin{theorem}[Asymptotic Utility of Wasserstein K-norm Gradient Mechanism]
\label{thm:wasskngutil} In the setting of \citep[Theorem 3.2]{reimherr2019kng}, under some regularity conditions (listed in the supplementary materials), if $\{(X_n, Y_n, Z_n)\}_{n=1}^\infty$ is a sequence of random variables in the setting of Theorem \ref{thm:wasskng} and $Y_n^* = \argmin_{y \in \mathcal{Y}} L_{X_n}(y)$ where $Y_n^* \to_p y^* \in \mathcal{Y}$ and $\Delta_{Z_n} \geq \Delta$ with probability 1 for all $n \in \mathbb{N}$, then:
$$
\norm{Y_n - Y_n^*}_K = O_P(n^{-1}).
$$
\end{theorem}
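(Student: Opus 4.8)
The plan is to reduce the statement to the unconditional utility theorem for the $K$-norm gradient mechanism, \citep[Theorem 3.2]{reimherr2019kng}, by conditioning on the public information and then unconditioning. First I would fix a realization $Z_n = z$ and observe that, by the construction in Theorem~\ref{thm:wasskng}, the conditional law of $Y_n$ given $Z_n = z$ is \emph{exactly} a $K$-norm gradient mechanism applied to the loss $L_{X_n}$ with gradient-sensitivity parameter $\sigma_y(\Delta_z)$ and base measure $\nu_z$. Since $Y_n^* = \argmin_{y} L_{X_n}(y)$ is the corresponding $M$-estimator, the hypotheses of the original theorem (convexity and coercivity of $L$, smoothness of $\nabla L$, and the statistical regularity conditions listed in the supplement) transfer to this conditional problem, so conditionally on $Z_n = z$ the target rate is available.

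The mathematical core, which I would recall and adapt from the original proof, is a localization-plus-quadratic-expansion argument. Near the minimizer a Taylor expansion gives $\nabla L_{X_n}(y) = \nabla^2 L_{X_n}(Y_n^*)\,(y - Y_n^*) + R_n(y)$, and under the regularity conditions the Hessian concentrates at the parametric scale, $\nabla^2 L_{X_n}(Y_n^*) = n\,\hat H_n$ with $\hat H_n \to_p H \succ 0$, whereas the gradient sensitivity $\sigma_y(\Delta_z)$ stays $O(1)$ in $n$. Substituting into the mechanism density and changing variables to $V \triangleq n\,(Y_n - Y_n^*)$, the conditional density of $V$ is proportional to $\exp\!\big(-\tfrac{\epsilon}{2\sigma_y(\Delta_z)}\,\norm{\hat H_n V}_K + o_P(1)\big)$, i.e. it converges to a fixed integrable law carrying $O_P(1)$ mass. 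This is precisely $\norm{Y_n - Y_n^*}_K = O_P(n^{-1})$, conditionally on $Z_n = z$.

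The remaining step is to pass from the conditional bound to the unconditional one, and this is where the hypotheses on $\Delta$ do their work. The assumption $\Delta_{Z_n} \geq \Delta$ with probability one, together with the regularity conditions, pins $\sigma_y(\Delta_{Z_n})$ inside a fixed interval $[\sigma_y(\Delta), \bar\sigma]$ bounded away from $0$ and $\infty$ uniformly in $n$ and in the realized $z$; this two-sided control lets the exponent above admit a dominating integrable envelope, so the conditional concentration is \emph{uniform} over the support of $Z_n$. Concretely, for every $\eta > 0$ I would produce an $M$ with $\sup_z \p(n\norm{Y_n - Y_n^*}_K > M \mid Z_n = z) < \eta$ for large $n$, and then take expectation over $Z_n$ to conclude.

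I expect the main obstacle to be exactly this uniformity: the Wasserstein sensitivity $\Delta_{Z_n}$ is random and the base measure $\nu_Z$ is allowed to depend on $Z$, so the conditional problems are not identical across $z$ and one cannot simply quote the original theorem pointwise and average. The crux is verifying that the regularity conditions of \citep{reimherr2019kng} hold with constants that are uniform in $z$, so that the leading constant in the $O_P(n^{-1})$ rate is governed by $\bar\sigma$ and the smallest eigenvalue of $H$ rather than by the particular realization of the public information. This uniform control is precisely what the lower bound $\Delta_{Z_n} \geq \Delta$ and the boundedness of $\sigma_y$ are there to supply.
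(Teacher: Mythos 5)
Your core argument coincides with the paper's: both proofs Taylor-expand $\nabla L_{X_n}$ around the minimizer, rescale to $a_n = n(Y_n - Y_n^*)$, lower-bound the exponent by a term of order $\norm{a_n}_2$ using the Hessian eigenvalue bound $\alpha$ together with control of $\sigma$, and conclude that the rescaled density converges to a fixed $K$-norm density, hence $a_n = O_P(1)$. Where you genuinely diverge is in how the randomness of $Z_n$ is handled. You condition on $Z_n = z$, invoke \citep[Theorem 3.2]{reimherr2019kng} pointwise in $z$, and then attempt to uniformize over $z$ before averaging. The paper never conditions: its Assumptions (5)--(7) are probabilistic statements about the random base measure --- integrability of the dominating envelope under $\nu_{Z_n}$, weak convergence $\nu_{Z_n} \to_D \nu_{Z^*} \ll \lambda$, and $\Omega_P(1)$ mass of $\nu_{Z_n}$ near $y^*$ --- and the limiting density $h(a) \propto \exp\left(-\tfrac{\epsilon}{2\sigma_{y^*}(\Delta^*)}\norm{\Sigma^{-1}a}_K\right)$ is obtained with respect to the limiting base measure $\nu_{Z^*}$ via dominated convergence and Scheff\'e's theorem, working jointly in $(X_n, Y_n, Z_n)$ throughout.

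There is a genuine gap at your unconditioning step. You want $\sup_z \p\left(n\norm{Y_n - Y_n^*}_K > M \mid Z_n = z\right) < \eta$, and you assert that $\Delta_{Z_n} \geq \Delta$ plus ``boundedness of $\sigma_y$'' pins $\sigma_y(\Delta_{Z_n})$ inside a fixed interval $[\sigma_y(\Delta), \bar\sigma]$ uniformly in $z$. But no uniform upper bound $\bar\sigma$, and no uniform-in-$z$ lower bound on the $\nu_z$-mass near $y^*$, appears among the theorem's hypotheses: the listed regularity conditions supply only the one-sided bound $\sigma_y(\Delta^*) \geq \sigma^*(\Delta^*)$, an integrability condition stated under $\nu_{Z_n}$, and mass near $y^*_n$ that is $\Omega_P(1)$ --- i.e., bounded below only \emph{in probability}, not over the support of $Z_n$. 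A realization $z$ whose base measure nearly annihilates a neighborhood of $y^*_n$ defeats your uniform tail bound, so the conditional problems cannot simply be quoted pointwise and supremized. The repair is available precisely because the target statement is $O_P$: restrict to the high-probability $z$-event on which $\int_{B_\delta(y_n^*)} d\nu_{Z_n} \geq c$ and the envelope is integrable, establish the tail bound there, and absorb the complementary event into the $O_P$ conclusion. With that modification your conditional route goes through; as written, the $\sup_z$ step demands more than the hypotheses deliver, which is exactly what the paper's joint dominated-convergence/Scheff\'e argument is structured to avoid.
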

\begin{proof}
(Sketch) similar to \citep{reimherr2019kng}, we Taylor expand the target density and use Scheff\'es theorem to establish the asymptotic distribution. By requiring that $Z$ supports the true optimization solution with high probability, but $Z$ remains less informative for $Y$ than $X$, $Y$ has an asymptotic $K$-norm density over the asymptotic base measure.
\end{proof}

In both mechanisms, we can interpret our new sensitivity $\sigma(\Delta_z)$ relative to the pure DP sensivitiy of the loss function. Specifically, if the collection of distributions $\mathbb{D}_{\mathrm{DPz}}$ is sufficiently rich for each $z \in \mathcal{Z}$, then in general the mechanisms defined in Theorems \ref{thm:wassexp} and \ref{thm:wasskng} induces a sensitivity at least as large as that in the $\epsilon$-DP analogue. We formalize this in Lemma \ref{lem:etp-rep}:

\begin{lemma}
\label{lem:etp-rep}Define $\mathbb{D}_X \triangleq \{ \xi_{\bm{x}} \mid \bm{x} \in \mathcal{X}^n \}$. Suppose for every distribution $\theta_{\bm{x}} \in \mathbb{D}_X$, there exists a sequence of distributions $\{ \theta_{z, \bm{x}}^{(m)} \}_{m=1}^\infty \subseteq \mathbb{D}_{\mathrm{DPz}}$ and a pair of secrets $(s_1, s_2) \in \mathbb{S}_{\mathrm{pairs,DPz}}$ such that:
$$
\theta_{z, \bm{x}}^{(m)} \to_D \xi_{\bm{x}} \quad \forall \bm{x} \in \mathcal{X}^n, \quad \quad \p(\cdot \mid s_1, \theta_{z, \bm{x}}), \p(\cdot \mid s_2, \theta_{z, \bm{x}}) \text{ well defined.}
$$
Then:
$$
\sup_{x,x'; d_H(x,x') = 1} \sup_{y \in \mathcal{Y}} \norm{L_X(y) - L_{X'}(y)} \leq \sigma(\Delta_z).
$$
\end{lemma}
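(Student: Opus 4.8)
The plan is to reduce the statement to a single comparison between the combinatorial (Hamming) adjacency that governs the $\epsilon$-DP sensitivity on the left and the metric radius $\Delta_z$ that governs $\sigma(\Delta_z)$ on the right. Since
\[
\sigma(\Delta_z) = \sup\left\{ \norm{L_x(y) - L_{x'}(y)} \mid x,x' \in \mathcal{X}^n,\ d(x,x') \le \Delta_z,\ y \in \mathcal{Y} \right\},
\]
it suffices to show that every Hamming-adjacent pair is admissible in this supremum, i.e.
\[
d_H(x,x') = 1 \ \Longrightarrow\ d(x,x') \le \Delta_z .
\]
Once this implication holds, every pair $(x,x')$ appearing on the left-hand side of the lemma also appears in the supremum defining $\sigma(\Delta_z)$, and taking the supremum over $y$ and over adjacent pairs yields $\sup_{d_H(x,x')=1}\sup_y \norm{L_x(y)-L_{x'}(y)} \le \sigma(\Delta_z)$ by monotonicity of the supremum. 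So the entire proof is the implication.

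To prove the implication I would realize an arbitrary adjacent pair as a limit of conditional data-evolution scenarios supplied by the richness hypothesis. The point masses $\xi_x = \delta_x$ cannot themselves lie in $\mathbb{D}_{\mathrm{DPz}}$, since conditioning a degenerate database on a secret forces $\p(s_i \mid \xi_x) \in \{0,1\}$ and so excludes it from $\mathbb{S}_{\mathrm{pairs,DPz}}$; this is exactly why the approximating sequence is needed. Fixing an adjacent pair differing only in record $i$, I invoke the hypothesis to obtain $\theta_{z,x}^{(m)} \in \mathbb{D}_{\mathrm{DPz}}$ with $\theta_{z,x}^{(m)} \to_D \xi_x$ and a discriminative pair $(s_1,s_2) \in \mathbb{S}_{\mathrm{pairs,DPz}}$ for which the conditionals $\mu_m \triangleq \p(\cdot \mid s_1, \theta_{z,x}^{(m)})$ and $\nu_m \triangleq \p(\cdot \mid s_2, \theta_{z,x}^{(m)})$ are well defined. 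Because $s_1 = \sigma_{i,\cdot}$ pins record $i$ while $s_2 = \sigma_i^c$ removes it, and because $\theta_{z,x}^{(m)}$ collapses onto a point mass, these conditionals concentrate on the two databases realized by the secret pair, which differ only in record $i$; i.e. $\mu_m \to_D \xi_x$ and $\nu_m \to_D \xi_{x'}$ for a Hamming-adjacent pair. Ranging over all $\bm x \in \mathcal{X}^n$ and all admissible secret pairs, this construction realizes every Hamming-adjacent pair.

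I then close the argument using the definition of $\Delta_z$ as a supremum of $W_\infty$-distances. For each $m$ we have $W_\infty(\mu_m, \nu_m) \le \Delta_z$ by definition, while for point masses the unique coupling is the product measure, whose essential supremum of the ground cost is $d(x,x')$, so $W_\infty(\xi_x, \xi_{x'}) = d(x,x')$. Lower semicontinuity of $W_\infty$ under weak convergence then gives
\[
\Delta_z \ge \liminf_m W_\infty(\mu_m, \nu_m) \ge W_\infty(\xi_x, \xi_{x'}) = d(x,x'),
\]
which is precisely the implication needed.

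The main obstacle is this final analytic step. $W_\infty$ is only lower semicontinuous (not continuous) under weak convergence, but lower semicontinuity is exactly the direction giving a \emph{lower} bound on $\Delta_z$, so it is the right tool. The genuine technical point is transferring the weak convergence $\theta_{z,x}^{(m)} \to_D \xi_x$ to weak convergence of the \emph{conditionals} $\mu_m, \nu_m$ to the intended point masses, since conditioning is generally discontinuous; this is where the ``well defined'' clause in the hypothesis is essential. I would make the concentration quantitative --- bounding the conditional mass outside small neighborhoods of $x$ and of $x'$, using nondegeneracy of $\p(s_i \mid \theta_{z,x}^{(m)})$ along the sequence --- so that the lower-semicontinuity bound applies cleanly; the remaining steps are routine.
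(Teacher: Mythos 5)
Your proposal takes essentially the same route as the paper's proof: the paper likewise fixes $\delta > 0$, uses the approximating sequence $\theta_{z,\bm{x}}^{(m)} \to_D \xi_{\bm{x}}$ to lower-bound the supremum defining $\Delta_z$ by the $W_\infty$ distance between the conditionals at $\theta_{z,\bm{x}}^{(m)}$, concludes $\Delta_z \geq 1-\delta$ for every $\delta$ and hence $\Delta_z \geq 1$ (the Hamming-adjacent distance), and then gets the sensitivity bound from monotonicity of the supremum defining $\sigma(\Delta_z)$ --- precisely your reduction. Your appeal to lower semicontinuity of $W_\infty$ under weak convergence is just a named, cleaner rendering of the paper's ``$\geq 1 - \delta$ for sufficiently large $m$'' step, and the conditional-concentration subtlety you explicitly flag (conditioning on the vanishing-probability secret $s_2$) is silently elided in the paper's proof as well.
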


\subsection{Effect of Public Information on $\epsilon$-TP Mechanism Design}

We present two examples showing how different choices of distributions on $Z$ affect implementation of the Wasserstein exponential mechanism (complete derivation details in Appendix \ref{apx:ex_derivs}). Example \ref{ex:mvn_cond} demonstrates how different distribution choices for $X \mid Z$ affect the sensitivity of our mechanism. Example \ref{ex:count_sdl} discusses how our framework can provide both $\epsilon$-TP and statistical disclosure limitation (SDL) guarantees simultaneously.

\begin{example}[Multivariate Conditional Normal Mean] 
\label{ex:mvn_cond}
Suppose our goal is to privately estimate a sample mean $\mu$. We assume a priori that for our database, $X_1, \dots, X_n \in [-\Delta/2, \Delta / 2]$ so that $\sup_{\bm{X} \sim \bm{X}'} | \overline{X} - \overline{X'} | \leq \frac{\Delta}{n}$. Then we can sample $Y$ from the standard exponential mechanism satisfying $\epsilon$-DP from the density with $L_1$ loss:
\begin{equation}
\label{eq:mean_expmech_dp}
f_x(y) \propto \exp\left(-\frac{n \epsilon}{2\Delta} |y - \overline{x}| \right) \ind{y \in [-\Delta/2, \Delta/2]}.
\end{equation}

Alternatively, we cast this problem in terms of $\epsilon$-TP. Let 
\begin{equation}
\label{eq:mean_expmech_puff}
\begin{cases}
\mathbb{D} \triangleq \{ X_1, \dots, X_n \overset{\text{iid}}{\sim} N(\mu, \sigma^2) \mid \mu \in [-\Delta / 2, \Delta / 2], \sigma \in [0, \infty) \} \\
\mathbb{S}_{\mathrm{pairs}} \triangleq \{ \{ \omega \in \Omega \mid X_i(\omega) = v\} \mid i \in [n], v \in [-\Delta / 2, \Delta / 2] \}.
\end{cases}
\end{equation}
Suppose we want to estimate $\overline{X}$ and we observe the following joint distribution of $\overline{X}, X_1, Z$
\begin{equation}
\label{eq:mean_expmech_joint}
\begin{pmatrix}
\overline{X} \\
\begin{pmatrix}
X_1 \\
Z
\end{pmatrix}
\end{pmatrix} \sim N\left( \begin{pmatrix}
\mu \\ \begin{pmatrix}
\mu \\ 
\mu_Z
\end{pmatrix}
\end{pmatrix}, 
\begin{pmatrix}
\frac{\sigma^2}{n} & \Sigma_{XVZ}^T \\
\Sigma_{XVZ} & \Sigma_{VZ}
\end{pmatrix}
\right),
\end{equation}
then following the calculations in Appendix \ref{apx:ex_derivs}, Example \ref{apx:mvn_cond}, the Wasserstein exponential mechanism sensitivity is given by

\begin{equation}
\label{eq:mean_expmech_wass}
\sup_{\theta \in \mathbb{D}} \sup_{(s_1, s_2) \in \mathbb{S}_{\mathrm{pairs}}} W_\infty(\p_{\theta,v_1}, \p_{\theta,v_2}) = \sup_{\theta \in \mathbb{D}} \left[ \Sigma_{XVZ}^T \Sigma_{VZ}^{-1} \begin{pmatrix}
\Delta \\ z - \mu_{Z}
\end{pmatrix} \right] . 
\end{equation}
Therefore, the sensitivity of the Wasserstein exponential mechanism loss depends explicitly on what kinds of dependence we allow between the private and public information, i.e., how we choose $\mathbb{D}$.
\end{example}

Remarks on Example \ref{ex:mvn_cond}: first, if we allow for arbitrary dependence, i.e., any $\Sigma_{XVZ}$, then the expression in Equation \ref{eq:mean_expmech_wass} can be infinite. This can be seen as an example of the ``no free lunch problem," as arbitrary dependence offers potentially no privacy in the worst-case scenario \citep{kifer2011no}. Second, note that when our choice of sensitivity, $\Delta$, is part of both the data generating process and the secret pairs, we leave ourselves open to potential privacy model misspecification (i.e., how $X$ and $Z$ are related). This is explicitly true here, but such a phenomena was only implicitly true for when interpreting $\epsilon$-DP's semantics using Pufferfish as shown in \citep{kifer2014pufferfish}.

\begin{example}[Count Data Satisfying $\epsilon$-TP and SDL measures]
\label{ex:count_sdl}
Suppose we are interested in binary data $X_i \in \{0, 1\}$ for $i \in [n]$ and a single counting query $S_n = \sum_{i=1}^n X_i$. The secret pairs are,
\begin{equation}
\label{eq:ex_count_sdl_sec_main}
    s_{ij} = \{ \omega \in \Omega \mid X_i(\omega) = j \}, \quad \mathbb{S}_{\mathrm{pairs}} \triangleq \left\{ (s_{i0}, s_{i1}) \mid i \in [n] \right\}.
\end{equation}
We consider public information $Z$ to be the property that releasing $(S_n, n - S_n)$ satisfies different Statistical Disclosure Limitation (SDL) measures, namely $k$-anonymity \citep{sweeney2002k} and $L_1$-distance $t$-closeness \citep{li2007t} to a public statistic $t^*$:
\begin{equation}
\label{eq:ex_count_sdl_main}
\begin{cases}
S_n \text{ satisfies $k$-anonymity} \implies S_n \in \{ k, k + 1, \dots, n - k \} \\
S_n \text{ satisfies $L_1$-$t$-closeness} \implies \left| \frac{S_n}{n} - t^* \right| \leq t .
\end{cases}
\end{equation}
In the full details (in Appendix \ref{apx:ex_derivs}, Example \ref{apx:count_sdl}), we show how satisfying SDL only restricts secret pairs and not the sensitivity of the Wasserstein Exponential Mechanism. This means $\epsilon$-TP only confers protections for those databases which plausibly agree with $Z$, but this partial schema could be constrained by the practical privacy guarantees afforded by SDL.
\end{example}

\section{Inference using PPD}

In this section, we consider estimating a parameter $\theta$ given both $Y$ and $Z$. Most papers in the DP literature are concerned with \textit{designing} both an optimal release mechanism \textit{and} an optimal estimator associated with that mechanism. However, because PPD is often used to produce synthetic data or other general-purpose statistics, inference typically requires \textit{adjusting} inferences based on the outputs from a fixed release mechanism, implying not all outputs can provide optimal estimators \citep{slavkovic2022statistical}. We will see these dynamics play out here as well.

\subsection{General inferential utility and congeniality}
\label{sec:inf_utility}

For the purposes of this paper, we informally define congeniality as the property that a sanitized statistic $Y$ agrees with public information. Formally, we say $Y$ and $Z$ are congenial for functions $g_1: \mathcal{X} \mapsto \mathbb{R}^d$ and $g_2: \mathcal{Z} \mapsto \mathbb{R}^d$ iff, for any output set $B \in \mathcal{F}_Y$, confidential database $x \in \mathcal{X}$, and realized public information $z \in \mathcal{Z}$, we have $g_1(Y) = g_2(Z)$ with probability 1. For example, if $Y$ is a collection of noisy sum queries over disjoint subsets of records, and $Z$ is the non-private total sum, we might reasonably expect the sum of entries in $Y$ to equal $Z$, i.e. the marginal sum of $Y$ is congenial with the public sum $Z$. 

Throughout this section, we will consider the effect of post-processing functions $Y^* = h(Y, Z)$. In many cases, $Y^*$ is the solution to an optimization problem where we find the nearest $y' \in \mathcal{Y}$ such that $y$ and $z$ are congenial, i.e.:
$$
h(y, z) = \mathrm{Proj}(y, \mathcal{Z}) \triangleq \argmin_{y' \in \mathcal{Y}} d(y, y') \text{ s.t. } g_1(y) = g_2(z)
$$

In implementing the Wasserstein exponential mechanism, we do not require any assumptions on the base measure $\nu$, so we don't require congeniality between $Y$ and $Z$. However, independent of whether $\nu$ encodes any prior belief about $Y$, we already saw that any particular $z \in Z$ can restrict which $X$s are plausible, with some works already addressing this problem \citep{gong2020congenial,gao2021subspace,soto2021differential}. In particular, the manifold approach of Theorem 4 in \citep{soto2021differential} demonstrates the benefit of  and loss function for $\epsilon$-TP:

\begin{corollary}[Asymptotic Errors for Congenial Manifolds]
\label{cor:manifold_ext_vs_intr}
Let $\mcY \triangleq \bbR^{d_1}$ be the output space for an instance of the Wasserstein exponential mechanism, and let $\mcZ \triangleq \mathrm{Span}(A)$ for some matrix $A$ of rank $d_2 < d_1$ (this may correspond to the results of summation queries on the confidential data, for example). Let $Y_1$ and $Y_2$ be two $\epsilon$-TP releases estimating a mean $\overline{X}$ using Laplace noise, supported on $\bbR^{d_1}$ and $\mathrm{Span}(A)$, respectively. Then under the regularity conditions of \citep[Theorem 4]{soto2021differential}:
$$
\begin{cases}
\E[\norm{\mathrm{Proj}(Y_1, \mathcal{Z}) - \overline{X}}_2^2] = O\left(\frac{d_1^2}{n^2 \epsilon^2} \right) \\
\E[\norm{Y_2 - \overline{X}}_2^2] = O\left(\frac{d_1 d_2}{n^2 \epsilon^2} \right).
\end{cases}
$$
\end{corollary}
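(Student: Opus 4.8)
The plan is to reduce both mean-squared errors to the trace of the covariance of the injected noise, and then to invoke \citep[Theorem 4]{soto2021differential} for the precise dimensional constants once each release is cast into the form that theorem requires. First I would write each release as the true mean plus mean-zero Laplace-type noise, $Y_1 = \overline{X} + W_1$ with $W_1$ supported on $\bbR^{d_1}$ and $Y_2 = \overline{X} + W_2$ with $W_2$ supported on $\mcZ = \mathrm{Span}(A)$. Because the mean is assumed congenial with the public constraints, so that $\overline{X} \in \mcZ$ up to the slack permitted by the regularity conditions, the orthogonal projection $P_A$ onto $\mcZ$ leaves $\overline{X}$ fixed and gives $\mathrm{Proj}(Y_1, \mcZ) - \overline{X} = P_A W_1$ and $Y_2 - \overline{X} = W_2$. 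The two MSEs then collapse to $\E\norm{P_A W_1}_2^2 = \mathrm{tr}(P_A \Cov(W_1))$ and $\E\norm{W_2}_2^2 = \mathrm{tr}(\Cov(W_2))$, so the statement becomes purely a matter of bounding these two traces.

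The next step is to control the two covariances through the common sensitivity calibration. Both mechanisms are instances of the Wasserstein exponential mechanism, so their noise scale is governed by $\sigma(\Delta_z)$; under the iid mean setup this sensitivity behaves like $\Delta/n$, supplying the $1/(n^2\epsilon^2)$ factor shared by both bounds. The essential difference is the ambient dimension in which the noise is generated: $W_1$ must randomize across all $d_1$ ambient directions, while $W_2$ need only randomize within the $d_2$-dimensional span $\mathrm{Span}(A)$. For $Y_1$ I would compute $\Cov(W_1)$ from the multivariate Laplace ($K$-norm) moment formula and evaluate $\mathrm{tr}(P_A \Cov(W_1))$ against the eigendecomposition of $\Cov(W_1)$ intersected with the rank-$d_2$ range of $P_A$; for $Y_2$ the noise never leaves $\mathrm{Span}(A)$, so its trace is evaluated directly in the intrinsic $d_2$ coordinates. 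The dimension bookkeeping that turns these traces into the stated orders $O(d_1^2/(n^2\epsilon^2))$ and $O(d_1 d_2/(n^2\epsilon^2))$ is exactly what the regularity conditions of \citep[Theorem 4]{soto2021differential} certify, and the comparison then shows the intrinsic release is smaller by the factor $d_2/d_1 < 1$.

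The hard part will be propagating the noise correctly through the projection $P_A$. The $K$-norm law used by the Wasserstein mechanism is not a product of independent coordinates; it is a heavy-tailed radial distribution carried by a uniformly random direction, so $P_A W_1$ is not a lower-dimensional copy of the same law, and evaluating $\E\norm{P_A W_1}_2^2$ requires decoupling the radius from the direction (for instance through rotational invariance, using $\E\norm{P_A U}_2^2 = d_2/d_1$ for $U$ uniform on the sphere). A second delicate point is bias: the congeniality argument that makes $P_A \overline{X} = \overline{X}$ holds only up to the approximation allowed by the regularity conditions, so I would need to verify that the residual $\norm{(I - P_A)\overline{X}}_2^2$ is $o(n^{-2})$ and hence dominated by the variance terms. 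Finally, to keep the comparison honest I must calibrate $\sigma(\Delta_z)$ identically for both mechanisms, so that the only genuine difference between the two bounds is the dimension in which noise is injected and not a hidden discrepancy in sensitivity.
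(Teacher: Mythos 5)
The paper does not actually prove this corollary from scratch: it is stated as a direct instantiation of \citep[Theorem 4]{soto2021differential}, with $Y_1$ the extrinsic (ambient) release post-processed by projection and $Y_2$ the intrinsic release supported on $\mcZ$; the two displayed rates are imported wholesale from that theorem, and the sentence immediately following the corollary (extension to nonlinear manifolds via the local linear approximation) makes clear these are the general-manifold bounds specialized to $\mathrm{Span}(A)$. Your plan to rederive the constants by trace bookkeeping is therefore a genuinely different route, but as sketched it contains an internal inconsistency. For the rotationally invariant $K$-norm law, the radius follows a Gamma$(d_1)$-type distribution, so $\E\norm{W_1}_2^2 \asymp d_1^2\sigma^2$ with $\sigma \asymp 1/(n\epsilon)$; your own step $\E\norm{P_A U}_2^2 = d_2/d_1$ then yields $\E\norm{P_A W_1}_2^2 = O\left(d_1 d_2/(n^2\epsilon^2)\right)$, and the iid-coordinate Laplace model gives the same order via $\mathrm{tr}(P_A \Cov(W_1)) = 2b^2 d_2$. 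Executed faithfully, your computation thus shows the \emph{projected extrinsic} release already attains $O\left(d_1 d_2/(n^2\epsilon^2)\right)$ --- consistent with the stated $O\left(d_1^2/(n^2\epsilon^2)\right)$ upper bound, since it is stronger, but flatly contradicting your closing claim that the comparison ``shows the intrinsic release is smaller by the factor $d_2/d_1$.'' Under your argument the two releases have the same order of error, so your proof erases precisely the separation the corollary is meant to exhibit.

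The reason is that the $d_1^2$-versus-$d_1 d_2$ gap is not a flat-subspace phenomenon: when $\mcZ$ is a linear span containing $\overline{X}$, orthogonal projection commutes with the noise decomposition and recovers exactly the $d_2/d_1$ variance reduction you compute; the $d_1^2$ bound for $Y_1$ in the source theorem is the unprojected ambient MSE carried through mere nonexpansiveness of the projection, because the analysis must cover curved manifolds where the orthogonal-decomposition credit is unavailable. Your argument leans essentially on flatness and rotational invariance, so it cannot reproduce the intended comparison nor extend to the nonlinear case the paper invokes right after the corollary. There is also a circularity in your second paragraph: you defer ``the dimension bookkeeping'' to ``what the regularity conditions of \citep[Theorem 4]{soto2021differential} certify,'' but regularity conditions do not supply constants --- either you invoke that theorem's conclusion (in which case the corollary follows immediately, as in the paper, and your trace computation is unnecessary), or you compute the constants yourself, in which case you must commit to a specific noise law (radial $K$-norm versus iid coordinates) and a specific sensitivity calibration ($\ell_1$ versus $\ell_2$, ambient versus intrinsic), choices your sketch never pins down and which, as shown above, lead to constants incompatible with your narrative.
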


These results can be extended to nonlinear manifolds under the regularity assumptions in \citep{soto2021differential} due to the uniqueness of the local linear approximation as $n \to \infty$. 

While the above result impacts the degree of errors added due to privacy, we see similar differences when 
We can use Blackwell's classical comparison of experiments theorem \citep{blackwell1953equivalent}, which trivially yields the following lemma: 

\begin{lemma}[Expected performance of post-processed Bayes estimators]
For any Bayesian decision problem estimating some  $\tau(\theta)$ with loss function $L: \Theta \times \Theta \mapsto [0, \infty)$, the Bayes estimators based on $Z,Y$ and $Y^*$, $\delta_{Z,Y}$ and $\delta_{Y^*}$ respectively, are related by:
$$
\E[L(\delta_{Z,Y}, \tau(\theta))] \leq \E[L(\delta_{Y^*}, \tau(\theta))].
$$
In particular, choosing $L(\theta, \delta) = (\theta - \delta)^2$ and $\tau(\theta) = \theta$ implies
$$
\E[\Var(\theta \mid Z,Y)] \leq \E[\Var(\theta \mid Y^*)].
$$
\end{lemma}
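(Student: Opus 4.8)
The plan is to place both decision problems into Blackwell's comparison-of-experiments framework~\citep{blackwell1953equivalent}. By Lemma~\ref{lem:etp-pp}, $Y^* = h(Y,Z)$ is a fixed, deterministic function of the pair $(Y,Z)$, and crucially $h$ does not depend on $\theta$. Hence the experiment that observes $Y^*$ is obtained from the experiment that observes $(Y,Z)$ through a parameter-free (degenerate) Markov kernel, so the latter Blackwell-dominates the former. Blackwell's theorem then yields weakly smaller Bayes risk for the $(Y,Z)$-experiment simultaneously across every loss $L$ and every target $\tau(\theta)$, which is precisely the first displayed inequality.

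Rather than invoke the abstract theorem, I would verify this directly, which is where the ``trivially'' in the statement comes from. Writing any Bayes risk via the tower property as $\E[L(\delta, \tau(\theta))] = \E\big[\, \E[L(\delta, \tau(\theta)) \mid Y, Z]\,\big]$, the estimator $\delta_{Z,Y}$ is by definition the pointwise minimizer of the inner conditional risk, hence the minimizer of the Bayes risk over \emph{all} estimators measurable with respect to $\sigma(Y,Z)$. Because $Y^*$ is a function of $(Y,Z)$, we have $\sigma(Y^*) \subseteq \sigma(Y,Z)$, so $\delta_{Y^*}$ is merely one feasible candidate in this class. The minimizer cannot do worse than any candidate, giving $\E[L(\delta_{Z,Y}, \tau(\theta))] \leq \E[L(\delta_{Y^*}, \tau(\theta))]$.

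For the squared-error specialization I would apply the standard identity that, under $L(\theta,\delta)=(\theta-\delta)^2$ and $\tau(\theta)=\theta$, the Bayes estimator is the posterior mean and its Bayes risk equals the expected posterior variance; thus $\E[L(\delta_{Z,Y},\theta)] = \E[\Var(\theta \mid Z,Y)]$ and $\E[L(\delta_{Y^*},\theta)] = \E[\Var(\theta \mid Y^*)]$, and substituting into the general inequality gives the claim. There is no real obstacle here: the only points needing care are confirming that the reduction from $(Y,Z)$ to $Y^*$ is through a $\theta$-independent kernel, so that Blackwell's hypotheses are met, and the usual regularity guaranteeing that the Bayes estimators exist and that the conditioning is well defined, under which the posterior-variance identity is valid.
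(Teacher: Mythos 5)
Your proposal is correct and takes essentially the same route as the paper, which offers no written proof beyond citing Blackwell's comparison-of-experiments theorem \citep{blackwell1953equivalent} and calling the lemma trivial. Your direct tower-property argument --- that $\delta_{Z,Y}$ minimizes Bayes risk over all $\sigma(Y,Z)$-measurable estimators, of which $\delta_{Y^*}$ is one since $\sigma(Y^*) \subseteq \sigma(Y,Z)$ --- correctly spells out exactly the elementary verification the paper leaves implicit, including the squared-error specialization via the posterior-variance identity.
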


These results are statements about the estimator performance in expectation. For a broader class of problems, inference can be also improved in a way that stochastically dominates using $Y^*$:

\begin{theorem}[Exponential family inference]
\label{thm:expfamdom} Let $X_1, \dots, X_n \overset{\text{iid}}{\sim} f_\theta$ with density:
$$
f_\theta(x) = h(x) \exp\left(\eta(\theta) T(x) - A(\eta(\theta)) \right).
$$
Let $Y$ be an instance of the Wasserstein exponential mechanism \citep{mcsherry2007mechanism} with density: 
$$
g_x(y) \propto \exp\left(-\frac{\epsilon}{2\sigma(\Delta_z)} L(\norm{y - T(x)}) \right) 
$$
for some loss function $L$ that depends only on a norm $\norm{y - T(x)}$. Let $Z = h(X)$ and let $Y^* = \mathrm{Proj}(Y, \mathcal{Z})$, so that $Y, Y^* \in \mathcal{Y}$. Then $Y$ has a monotone likelihood ratio in $\theta$. Furthermore, define the test: 
$$
H_0: \theta \leq \theta_0, \quad \quad H_1: \theta > \theta_0.
$$
For any unbiased test $\phi: \mathcal{Y}^* \mapsto [0, 1]$ for $\theta$ based on $Y^*$, there exists a uniformly more powerful test $\phi': \mathcal{Y} \times \mathcal{Z} \mapsto [0, 1]$. If $\p_\theta(Y \neq Y^*) > 0$ for all $\theta \in \Theta$, then this improvement is strict.
\end{theorem}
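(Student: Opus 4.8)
The statement has three parts (MLR of $Y$; a uniformly more powerful $(Y,Z)$-test beating any unbiased $Y^*$-test; strictness under $\p_\theta(Y\neq Y^*)>0$), and the MLR claim is the engine that drives the other two. I would therefore establish MLR first by a total-positivity argument, then invoke Karlin--Rubin after observing that $Y^*$ is a coarsening of $Y$, and finally read off strictness from the equality conditions in Neyman--Pearson.

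\textbf{Step 1: MLR via composition of totally positive kernels.} Since $g_x$ depends on the data only through the sufficient statistic $T=T(X)$, the marginal law of $Y$ is the mixture $p_\theta(y)=\int g_t(y)\,dP_\theta(t)$, where $P_\theta$ is the law of $T$ with density proportional to $\exp(\eta(\theta)t)$ against a base measure $\mu$. The plan is to recognize two kernels: $K_1(\theta,t)\propto\exp(\eta(\theta)t)$, which is $\mathrm{TP}_\infty$ in $(\theta,t)$ whenever $\eta$ is strictly increasing (a standard exponential-family regularity condition), and the translation kernel $K_2(t,y)=g_t(y)\propto\exp\!\big(-\tfrac{\epsilon}{2\sigma(\Delta_z)}L(\norm{y-t})\big)$. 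The latter is $\mathrm{TP}_2$ in $(t,y)$ exactly when $u\mapsto\exp(-cL(\norm{u}))$ is a Pólya frequency function of order $2$, i.e. when $L(\norm{\cdot})$ is convex; I would add this convexity (e.g. $L$ convex nondecreasing) to the regularity conditions, since it is needed and not explicit. Karlin's basic composition formula then yields that $p_\theta(y)=\int K_1(\theta,t)K_2(t,y)\,d\mu(t)$ is $\mathrm{TP}_2$ in $(\theta,y)$, which is precisely the MLR property. The $t$-dependent normalizer of $g_t$ only rescales rows and preserves $\mathrm{TP}_2$, so the conclusion is insensitive to the choice of base measure $\nu_Z$.

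\textbf{Step 2: reduction to a $Y$-based test and Karlin--Rubin.} Next I would note that $Y^*=\mathrm{Proj}(Y,\mathcal{Z})$ is a deterministic, many-to-one function of $(Y,Z)$, and a function of $Y$ alone when the congeniality manifold $\mathcal{Z}$ is fixed; using the latter, any $\phi(Y^*)$ is a randomized test based on $Y$. Let $\alpha=\sup_{\theta\le\theta_0}\E_\theta[\phi]$ be its size, which unbiasedness pins down as $\E_{\theta_0}[\phi]=\alpha$. By Step 1 and the Karlin--Rubin theorem there is a UMP level-$\alpha$ test $\phi'$ of the form ``reject for large $Y$'' (with boundary randomization), whose power is nondecreasing so that its size is exactly $\alpha$, and which is UMP among \emph{all} level-$\alpha$ tests that are functions of $Y$. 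Since $\phi$ is such a test, $\E_\theta[\phi']\ge\E_\theta[\phi]$ for all $\theta>\theta_0$ at equal size, and $\phi'$ is a fortiori a function of $(Y,Z)$; this gives uniform domination.

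\textbf{Step 3: strictness, and the main obstacle.} For strictness I would use that at each $\theta_1>\theta_0$ the level-$\alpha$ MP test is a.e. unique and equal to $\phi'$, so $\E_{\theta_1}[\phi']=\E_{\theta_1}[\phi]$ would force $\phi=\phi'$ a.e.; but $\phi'=\ind{Y>c}$ is not $Y^*$-measurable once the fibers of the projection $Y\mapsto Y^*$ straddle the threshold $c$ with positive probability, which is exactly what $\p_\theta(Y\neq Y^*)>0$ (together with the continuity of the law of $Y$) delivers. Hence $\phi\neq\phi'$ on a set of positive measure and the inequality is strict at every $\theta>\theta_0$. The hard part will be Step 1: verifying the $\mathrm{PF}_2$/log-concavity hypothesis on the mechanism kernel and applying the composition theorem cleanly (and, should $\mathcal{Y}$ be treated as genuinely multivariate, reducing the one-sided claim to the scalar MLR along the sufficient-statistic coordinate $T$). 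A secondary subtlety is the reduction in Step 2: it is transparent when $Y^*$ is a function of $Y$ alone, but if the constraint set depends on $Z$ in a $\theta$-informative way one must additionally argue that $Z$ is ancillary given $Y$ to keep the comparison with $Y$-based tests valid.
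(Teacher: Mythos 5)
Your Step 1 is essentially the paper's own argument in different clothing: the paper writes the marginal density of $Y$ as the convolution $m_\theta(y) = \int g_\Xi(y-x) f_\theta(x)\,dx$ with $\Xi = Y - T(X)$ and applies Wijsman's (1985) inequality on ratios of integrals, which is exactly the total-positivity composition you invoke via Karlin's basic composition formula. In fact your version is more careful on one point: the paper simply asserts that ``$g_\Xi$ has a monotone likelihood ratio by construction,'' whereas you correctly observe that the noise kernel is $\mathrm{TP}_2$ only when $u \mapsto \exp(-cL(\norm{u}))$ is $\mathrm{PF}_2$, i.e.\ when $L(\norm{\cdot})$ is convex, and you propose adding that hypothesis explicitly. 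That is a legitimate refinement of a condition the paper leaves implicit.

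Step 2, however, contains a genuine gap, and it is precisely the ``secondary subtlety'' you flagged and deferred. The Karlin--Rubin theorem gives a test that is UMP among level-$\alpha$ tests that are \emph{functions of $Y$ alone}, so your domination argument requires $\phi(Y^*)$ to be $Y$-measurable. But in this theorem $Y^* = \mathrm{Proj}(Y, \mathcal{Z})$ with $Z = h(X)$ random, and the projection (as defined in Section 3.1 of the paper) constrains $Y$ to agree with the realized $z$; so $\phi(Y^*)$ is genuinely a $(Y,Z)$-based test. Since $Z$ is a statistic of the data whose conditional law given $Y$ is not $\theta$-free in general (nothing here makes $Z$ ancillary given $Y$, and $Y$ is not sufficient for the pair $(Y,Z)$), a $(Y,Z)$-based test can in principle outperform the UMP $Y$-test at some alternatives, and your claimed uniform domination does not follow. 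The paper avoids this detour entirely with a direct fiber-wise construction: it sets $\phi'(Y,Z) = \phi(Y^*)$ on the event $\{Y = Y^*\}$, lowers the rejection probability on $\{Y < Y^*\}$ (MLR guarantees this costs no power at alternatives relative to $\phi$), computes the saved level $\gamma = \E_{\theta_0}[\phi(Y^*) - \phi'(Y,Z)]$, and reallocates it as additional rejection mass on $\{Y > Y^*\}$, where MLR guarantees a power gain; the result is a level-$\alpha$ test on $\mathcal{Y} \times \mathcal{Z}$ dominating $\phi$, strictly whenever $\p_\theta(Y \neq Y^*) > 0$. Because the paper compares $\phi'$ to $\phi$ itself rather than passing through a UMP-in-$Y$ intermediary, it needs no ancillarity of $Z$. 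Your strictness argument (a.e.\ uniqueness of the most powerful test plus non-$Y^*$-measurability of the threshold test) is sound in isolation, but it inherits the Step 2 gap; to repair your route you would need either to restrict to a fixed, $z$-independent congeniality manifold (as in the setting of Corollary \ref{cor:manifold_ext_vs_intr}) or to replace Karlin--Rubin with a mass-reallocation construction of the paper's type.
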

\begin{proof}
(Sketch) we use the main theorem in \citep{wijsman1985useful} to establish the monotone likelihood ratio. For the second result, note that if multiple $Y,Z$ values are mapped to the same $Y*$, then one can construct a more powerful test by calculating finer-grained and more efficient Type I and II errors.
\end{proof}

Note that for Theorem \ref{thm:expfamdom}, it is often the case that as $n \to \infty$, $\norm{Y_n^* - Y_n} \to_P 0$; in words, the effect of post-processing for congeniality with $Z$ may be asymptotically negligible. This indicates the importance of finite-sample analyses for these problems, as asymptotic analyses may fail to detect such a difference.

\subsection{Posterior inference from the target distribution}

In the previous section, we motivated the construction of estimators directly using $Y$ and $Z$ instead of relying on a post-processed $Y^*$. Because the sampling distribution of estimators from two potentially dependent sources can be difficult to analytically derive, we turn to Bayesian computational approaches and propose Algorithm \ref{alg:ppi-rej} that relies on rejection sampling to produce exact draws from $\theta \mid Y, Z$, and Algorithm \ref{alg:ppi-is} that obtains valid inference based on importance sampling; complete algorithm are available in Appendix \ref{apx:algs} and inference validity proofs are in Appendix \ref{apx:proofs}.

Algorithm \ref{alg:ppi-rej} is an exact sampling algorithm for $\theta \mid Y, Z$, an extension of the algorithm for exact inference from a private posterior from \citep{gong2019exact} now conditioning on $Z$ as used empirically in \cite{Seeman2020}. In \cite{Fearnhead2012}, the authors demonstrate an important connection between approximate Bayesian computation (ABC) and inference under measurement error. They noted that samples drawn from an approximate posterior distribution based on an observed summary statistic can be alternatively interpreted as exact samples from a posterior distribution conditional on a summary statistic measured with noise. \cite{gong2019exact} showed that this agrees with the sufficient statistic perturbation setup for private Bayesian inference (e.g., \cite{foulds2016theory}). 

\begin{theorem}
\label{thm:ppi-rej-valid}Algorithm \ref{alg:ppi-rej} samples from $\theta \mid Y, Z$
\end{theorem}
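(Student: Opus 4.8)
The plan is to prove correctness by the standard argument for rejection samplers: show that the density of the $\theta$ values returned by Algorithm~\ref{alg:ppi-rej} is proportional to the target posterior $p(\theta \mid Y = y, Z = z)$, so that accepted draws are exact samples. First I would expand the target by Bayes' rule,
\begin{equation*}
p(\theta \mid y, z) \propto \pi(\theta)\, p(y, z \mid \theta) = \pi(\theta) \int_{\mcX^n} p(y \mid x, z)\, p(z \mid x)\, f_\theta(x)\, dx = \pi(\theta)\, \E_{X \sim f_\theta}\!\left[ p(y \mid X, z)\, p(z \mid X) \right],
\end{equation*}
where $f_\theta$ is the data-evolution density, $p(z \mid x)$ is the (possibly degenerate) law of the public information, and $p(y \mid x, z)$ is the mechanism density taken with respect to the base measure $\nu_z$ appearing in Theorem~\ref{thm:wassexp}. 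The target is thus the prior reweighted by the expected mechanism likelihood at the observed release.

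Next I would characterize the law of one iteration of the sampler. Each iteration proposes $\theta$ from the prior and a latent database from the data-evolution scenario, then accepts $\theta$ with probability proportional to the mechanism density $p(y \mid X, z)$ evaluated at the observed $(y,z)$; this is exactly the ABC-as-measurement-error identity of \citet{Fearnhead2012}, specialized to the private mechanism as in \citet{gong2019exact}, now with both the acceptance kernel and its base measure indexed by the realized $z$. Marginalizing the accepted pair $(\theta, X)$ over $X$ yields an accepted-$\theta$ density proportional to $\pi(\theta)\, \E_{X \sim f_\theta}[p(y \mid X, z)\, p(z \mid X)]$, matching the display above; the unknown normalizing constants cancel because acceptance depends only on a density ratio controlled by a finite envelope.

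The main obstacle is the degenerate regime $Z = g(X)$ stressed throughout Section~\ref{sec:post-break}, where conditioning on $Z = z$ means conditioning on a measure-zero event and $p(z \mid x)$ is not an ordinary density. I would resolve this by replacing the naive factorization with a regular conditional distribution (a disintegration) of $X$ given $Z = z$, verifying that the proposal step samples exactly this fiber law so that $p(z \mid x)$ is correctly interpreted on $\{x : g(x) = z\}$; when $Z$ is discrete (as in Lemma~\ref{lem:etp-cond}) this reduces to an ordinary indicator and no disintegration is needed. The remaining technical points are that the mechanism density and the rejection envelope must be taken with respect to the same $z$-dependent base measure $\nu_z$, so the acceptance ratio is a genuine Radon--Nikodym derivative, and that $p(y \mid x, z)$ is bounded over $x$ --- which holds by the $\epsilon$-TP construction, since the mechanism density is a bounded exponential tilt --- guaranteeing the sampler terminates almost surely.
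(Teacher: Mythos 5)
Your proposal is correct and its skeleton is the same as the paper's: both are the standard rejection-sampler correctness argument, extending the ABC-as-measurement-error identity of \citet{Fearnhead2012} and the exact private-posterior sampler of \citet{gong2019exact} by conditioning every stage on the realized $z$, with the normalizing constants cancelling against a finite envelope. The genuine difference is in the Bayes factorization. You expand the target through the \emph{unconditional} prior as $\pi(\theta)\int p(y\mid x,z)\,p(z\mid x)\,f_\theta(x)\,dx$, which forces you to (i) assume implicitly that $Z \independent \theta \mid X$ and (ii) repair the degenerate case $Z=g(X)$ with a disintegration so that $p(z\mid x)$ is meaningful. The paper avoids both issues by conditioning on $z$ from the outset: it writes $\pi(\theta\mid y,z) \propto \int \pi(\theta\mid z)\,\pi(x\mid\theta,z)\,\pi(y\mid x,z)\,d\nu_X$, taking the conditional laws $\pi(\theta\mid z)$ and $\pi(x\mid\theta,z)$ --- precisely the $\epsilon$-TP data-evolution scenarios of Definition \ref{def:etp} --- as primitives, so $p(z\mid x)$ never appears and no disintegration machinery is needed. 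Your route buys a derivation of the fiber law from the joint model (useful if one only has the unconditional model and $g$), at the cost of extra regularity bookkeeping; the paper's route is shorter and matches the framework's primitives. One point of care: Algorithm \ref{alg:ppi-rej} proposes $\theta^* \sim \pi(\theta\mid Z)$, not $\pi(\theta)$, and your claimed accepted-$\theta$ density $\pi(\theta)\,\E_{X\sim f_\theta}[\,p(y\mid X,z)\,p(z\mid X)\,]$ matches the target only because the $p(z\mid\theta)$ factors from the $z$-conditioned proposals cancel; had $\theta$ truly been drawn from the unconditional prior with $X$ drawn from the fiber law, a stray $1/p(z\mid\theta)$ would remain and the output would be wrong. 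Your boundedness remark is also well placed, and in fact slightly more careful than the paper: the envelope $\sup_{y^*}\pi(y^*\mid x^*,z)$ there depends on the proposed $x^*$, and its cancellation as a constant $C$ requires that this supremum not vary with $x$ --- which holds for the norm-based Wasserstein exponential mechanism, consistent with your ``bounded exponential tilt'' observation.
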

\begin{proof}
(sketch) this rejection algorithm follows from extending \citep{Fearnhead2012,gong2019exact} with additional conditioning for $Z$ in each proposal stage. Note that while Algorithm \ref{alg:ppi-rej} applies to the Wasserstein exponential mechanism releases, it could generically apply to any privacy mechanism satisfying the analytic tractability requirements outlined in \citep{gong2020transparent}.
\end{proof}

\begin{theorem}
\label{thm:ppi-is-valid}Algorithm \ref{alg:ppi-is} estimates $\widehat{E}_m[\theta \mid Y, Z],$ and $\widehat{E}_m[\theta \mid Y, Z] \to_P \E[a(\theta) \mid Y, Z]$ as $m \to \infty$.
\end{theorem}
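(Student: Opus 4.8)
The plan is to establish that Algorithm \ref{alg:ppi-is} is a standard importance sampling scheme whose proposal distribution and importance weights are chosen so that the weighted empirical average is a consistent estimator of the posterior expectation $\E[a(\theta) \mid Y, Z]$. First I would identify the target integral: we wish to estimate $\E[a(\theta) \mid Y, Z] = \int a(\theta) \, \pi(\theta \mid Y, Z) \, d\theta$, where the posterior $\pi(\theta \mid Y, Z)$ factors via Bayes' rule into the prior $\pi(\theta)$ times the joint likelihood contribution of $(Y, Z)$ given $\theta$, normalized by the marginal. Because the Wasserstein exponential mechanism density $g_x(y)$ has an intractable normalizing constant depending on $\Delta_z$ (and hence on $X$), the likelihood $\p(Y \mid \theta, Z)$ typically cannot be evaluated in closed form, so the algorithm must sidestep direct evaluation by sampling latent databases $X$ from the conditional prior given $Z$ and averaging the (tractable up to constant) mechanism density.

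Next I would write the estimator in the canonical self-normalized importance sampling form $\widehat{E}_m[\theta \mid Y, Z] = \sum_{j=1}^m w_j \, a(\theta_j) \big/ \sum_{j=1}^m w_j$, where the pairs $(\theta_j, X_j)$ are drawn from the proposal (the prior on $\theta$ together with the conditional data model for $X \mid \theta, Z$) and the weights $w_j$ are proportional to the mechanism likelihood $g_{X_j}(Y)$ evaluated at the observed release. The key algebraic step is to verify that $\E_{\text{proposal}}[w_j \, a(\theta_j)] \big/ \E_{\text{proposal}}[w_j]$ equals the target posterior expectation; this amounts to confirming that integrating the latent $X_j$ out of the weighted proposal density reconstructs the marginal likelihood $\p(Y \mid \theta, Z)$ up to the same $\theta$-independent (or appropriately cancelling) normalizing factor, so that the ratio recovers $\pi(\theta \mid Y, Z)$. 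Provided the weights are integrable and the target expectation is finite, the strong law of large numbers applied separately to numerator and denominator, combined with the continuous mapping theorem for the ratio, delivers the claimed convergence $\widehat{E}_m[\theta \mid Y, Z] \to_p \E[a(\theta) \mid Y, Z]$ as $m \to \infty$.

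The main obstacle I anticipate is not the convergence argument itself, which is routine once the weights are correctly specified, but rather justifying that the importance weights are well defined and have finite expectation despite the intractable and data-dependent normalizing constant of the mechanism. In particular, I would need to argue that the $X$-dependent normalizing constant of $g_X(y)$ either cancels in the self-normalized ratio or is absorbed consistently into the proposal, so that the weights remain proportional to a genuine likelihood; this is exactly the subtlety that Theorem \ref{thm:ppi-rej-valid} handles on the rejection side via the analytic tractability requirements of \citep{gong2020transparent}. A secondary technical point is ensuring the proposal has heavy enough support relative to the target (so weights have finite variance, or at least finite mean for consistency), which I would handle by invoking the absolute continuity of $\theta \mid Y, Z$ with respect to the prior and the boundedness or integrability assumptions on $a(\theta)$ implicit in the statement.
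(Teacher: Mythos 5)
Your proposal is correct and takes essentially the same route the paper intends: the paper's proof of Theorem \ref{thm:ppi-is-valid} is only a one-line sketch deferring to Theorem \ref{thm:ppi-rej-valid}, and the self-normalized importance-sampling argument you spell out---integrating the latent $X^{(j)}$ out of the weighted proposal to recover $\pi(Y \mid \theta, Z)\,\pi(\theta \mid Z)$ up to the common factor $\pi(Y \mid Z)$, then applying the law of large numbers to numerator and denominator and the continuous mapping theorem to the ratio---is precisely the ``analogous'' computation that sketch gestures at. Two minor observations: Algorithm \ref{alg:ppi-is} allows a general proposal $g$ with correction factor $\pi(\theta^{(j)} \mid Z)/g(\theta^{(j)})$ in the weight rather than proposing from the prior as you assume, though your algebra goes through verbatim with that factor included; and on your anticipated obstacle, note that the $X$-dependent normalizing constant of the mechanism density does \emph{not} cancel in the self-normalized ratio (it varies with $X^{(j)}$ across draws, unlike in the rejection sampler where the per-sample division by $\sup_{y^*} \pi(y^* \mid X^*, Z)$ cancels it), so one genuinely needs the normalized density $\pi(Y \mid X, Z)$ to be evaluable---which is exactly the analytic tractability condition the paper imports from \citep{gong2020transparent} and that you correctly flag.
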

\begin{proof}
(sketch) analogous to the proof for Theorem \ref{thm:ppi-rej-valid}. Note this method is more computationally efficient if one is only interested in a Bayes estimator for $\theta \mid Y, Z$.
\end{proof}

\section{Data analysis example}
\label{sec:data-analysis}

Here we present a sample data analysis to demonstrate the impact of public information on synthesis strategies and downstream inferences. Our analysis is based on data from PA Department of Health \citep{pahealth} and the U.S. Census Bureau's Current Population Survey (CPS) \citep{ipums}. Both data sets are published publicly and available for academic public use under their respective terms of service (available in the references); therefore, our analysis poses no additional privacy concerns nor data use issues. We synthesize sanitized monthly county-level COVID-19 case and death counts, $Y$, where we assume the data curator chose to release PPD at a fixed point in time when previous results had no formal privacy guarantees. Therefore our public information, $Z$, is the total number of COVID-19 cases at the current month, and county-level COVID-19 cases at the previous month, mirroring a realistic problem faced by organizations releasing information about the same population at regular time intervals. 

\subsection{Data Description}

The Pennsylvania Department of Health collects data on the total number of reported COVID-19 cases and deaths per county \citep{pahealth}. Data from the US Census Bureau's Current Population Survey (CPS) provided estimated population counts by demographic strata, accessed through IPUMS \citep{ipums}. For the purposes of our analysis, we will treat these values as fixed population totals (i.e., we temporarily ignore other errors due to sampling and weighting schemes). 

\subsection{Synthesis Methods and Base Measure Choice}

We will examine the effect of public information on our ability to perform inference from these data sources. Throughout this section, we use the following notation, with $[n] \triangleq \{ 1, 2, \dots, n \}$:
$$
\begin{cases}
\displaystyle j \in \{ 1, \dots, J\} &\triangleq \text{Pennsylvania counties, } J = 67 \\
\displaystyle t \in \{ 1, \dots, T \} &\triangleq \text{Year-month periods, } T=24 \\
\displaystyle X_{j,t}^{(c)} &\triangleq \text{Number of COVID-19 cases in county $j$ at time $t$} \\
\displaystyle X_{j,t}^{(d)} &\triangleq \text{Number of COVID-19 deaths in county $j$ at time $t$} \\
\end{cases}
$$
Our goal will be to privatize $X_{j,t}^{(c)}$ and $X_{j,t}^{(d)}$ under the following public information scenarios:
$$
Z_{t} = \begin{cases}
\displaystyle X_{j,t-1}^{(c)} = x_{j,t-1}^{(c)} \\
\displaystyle X_{j,t-1}^{(d)} = x_{j,t-1}^{(d)} \\
\displaystyle \sum_{j=1}^J X_{j,t} = s_{j,t}^{(c)} \\
\displaystyle \p(X_{j,t}^{(c)} \geq X_{j,t}^{(d)}) = 1
\end{cases}
$$
For this analysis, we will consider the following synthesis procedures for case counts:

\begin{enumerate}
    \item \textbf{Post-processing}: first, we synthesize:
    $$
    \begin{cases}
    Y_{j, t}^{(c)} = X_{j, t}^{(c)} + \varepsilon_{j, t}^{(c)} \\
    Y_{j, t}^{(d)} = X_{j, t}^{(d)} + \varepsilon_{j, t}^{(d)}
    \end{cases} \quad \qquad \varepsilon_{j, t}^{(c)}, \varepsilon_{j, t}^{(d)} \overset{\text{iid}}{\sim} \mathrm{DiscreteLaplace}\left( \frac{\epsilon}{\Delta} \right)
    $$
    Then, we will perform deterministic two-stage post-processing:
    $$
    \begin{pmatrix}
    \tilde{\bm{Y}}_t^{(c)} \\
    \tilde{\bm{Y}}_t^{(d)}
    \end{pmatrix} = \argmin_{\bm{y} \in \mathbb{R}^{2J}} \norm{\begin{pmatrix} \bm{y}^{(c)} \\ \bm{y}^{(d)} \end{pmatrix} - \begin{pmatrix}
    \bm{Y}_t^{(c)} \\
    \bm{Y}_t^{(d)}
    \end{pmatrix}}_{L_2}^2 \text{s.t.}
    $$
    $$
    \begin{cases}
    \displaystyle \sum_{j=1}^J y_j^{(c)} = s_{j,t} \\ 
    \displaystyle y_j^{(c)} \geq y_j^{(d)} \geq 0 \quad \forall j \in [J]
    \end{cases}
    $$
    Next:
    $$
    \begin{pmatrix}
    \bm{Y}_t^{(c)*} \\
    \bm{Y}_t^{(d)*}
    \end{pmatrix}
    = 
    \begin{pmatrix}
    \floor{\tilde{\bm{Y}}_t^{(c)}} \\
    \floor{\tilde{\bm{Y}}_t^{(d)}}
    \end{pmatrix} + \argmin_{\bm{y} \in \{0, 1\}^{2J}}
    \norm{\begin{pmatrix} \bm{y}^{(c)} \\ \bm{y}^{(d)} \end{pmatrix} - \begin{pmatrix}
    \tilde{\bm{Y}}_t^{(c)} \\
    \tilde{\bm{Y}}_t^{(d)}
    \end{pmatrix}}_{L_1} \text{s.t.} 
    $$
    $$
    \begin{cases}
    \displaystyle \sum_{j=1}^J y_j^{(c)} = s_{j,t} - \floor{\tilde{\bm{Y}}_t^{(c)}} \\ 
    \displaystyle y_j^{(c)} + \floor{\tilde{Y}_t^{(c)}} \geq y_j^{(d)} \floor{\tilde{Y}_t^{(d)}} \geq 0 \quad \forall j \in [J]
    \end{cases}
    $$
    
    \item \textbf{Wasserstein Mechanism}
    Alternatively, we'll compare this method to the following version of the Wasserstein exponential mechanism with loss function:
    $$
    L_{\bm{X}}(\bm{Y}) = \norm{\begin{pmatrix} \bm{y}^{(c)} \\ \bm{y}^{(d)} \end{pmatrix} - \begin{pmatrix}
    \bm{Y}_t^{(c)} \\
    \bm{Y}_t^{(d)}
    \end{pmatrix}}_{L_2}
    $$
    Note that for this section, we will assume $\Delta_Z = 2$ across mechanisms, without specifying the choice of conditional distributions which leads to this value. To understand how this choice affects $\Delta_Z$, please see Section \ref{apx:delta_z}.
    
    We'll use the public information to inform the base measure in three different secnarios:
    \begin{enumerate}
        \item Naive base measure:
        $$
        \nu_{\bm{Z}}(\bm{Y}) \propto \ind{\bm{Y} \in \mathbb{Z}^{J}}
        $$
        \item Deterministic congenial base measure: 
        $$
        \nu_{\bm{Z}}(\bm{Y}) \propto \ind{\bm{Y} \in \mathbb{Z}^{J}, \sum_{i=1}^{J} Y_{j,t}^{(c)} = s_t^c , Y_{j,t}^{(c)} \geq Y_{j,t}^{(d)} \geq 0 }
        $$
        \item Prior congenial base measure:
        $$
        \nu_{\bm{Z}}(\bm{Y}) \propto \phi(\bm{Y}; s_t^{(c)}, \bm{X}_{t-1}^{(c)}) \ind{\bm{Y} \in \mathbb{Z}^{J}, \sum_{i=1}^{J} Y_{j,t}^{(c)} = s_t^c , Y_{j,t}^{(c)} \geq Y_{j,t}^{(d)} \geq 0 }
        $$
        Where $\phi$ is the PMF of the Dirichlet-Multinomial distribution:
        $$
        \phi(\bm{Y}_t^{(c)}; s_t^{(c)}, \alpha \bm{X}_{t-1}^{(c)}) = \frac{\Gamma(\alpha s_{t}^{(c)}) \Gamma(s_{t}^{(c)} + 1)}{\Gamma((\alpha + 1) s_{t}^{(c)})} \prod_{j=1}^J \frac{\Gamma(Y_{j,t}^{(c)} + \alpha X_{j,t-1}^{(c)})}{\Gamma(Y_{j,t}^{(c)}) \Gamma(\alpha X_{j,t-1}^{(c)} + 1)}
        $$
    \end{enumerate}
    Note that, for computational ease, we draw samples from this distributions with Markov Chain Monte Carlo. However, exact methods, such as those based on rejection sampling \citep{awan2021privacy} or coupling from the past \citep{seeman2021exact}, can alleviate these computational issues.
\end{enumerate}

We synthesize 100 replicates of the entire synthetic data set for all time points and counties using each of the methods outlined above. We plot the synthesized case rate results as the boxplots in Figure \ref{fig:wass_base_complete}, with the solid lines representing the confidential (i.e., true) case rates. We single out three counties with different population sizes: Allegheny county (home to Pittsburgh, a large city), Erie county (a mid-sized suburban county), and Cameron couty (a small rural county). In the low-privacy regime, $\epsilon=1$, we see all the methods perform comparably well across county sizes, in that the boxplots of sanitized values indicate concentration around the non-private statistics. Note that this case was intentionally chosen to demonstrate how both our method and post-processing can preserve the most important structural features of the data in the low-privacy regime. However, in the high-privacy regime, $\epsilon = .01$, our method produces results conforming to the prior, whereas post-processing introduces noise that entirely degrades the statistical signal. This indicates that incorporating public information into the prior yields results with more effective weighting between prior public information and the newly introduced noisy statistical measurements. 

Furthermore, in Figure \ref{fig:wass_base_relative}, we plot the relative percentage error for COVID-19 case rates across privacy budgets and synthesis mechanisms for all counties and months in our study. As expected, the methods perform comparably well in the low-privacy regime ($\epsilon = 1$). However, in the high-privacy regime ($\epsilon = .01$), the counties and months with the largest relative error (i.e., the least populous counties) have smaller errors. Therefore our method helps to control the worst-case errors introduced by privacy preservation for analyzing small sub-populations, a frequent concern amongst social scientists \citep{santos2020differential}. 

As with any privacy mechanism operating on different statistics with different signal-to-noise ratios, different mechanism implementations will perform better or worse depending on the chosen data utility metric. We intentionally chose a low-privacy and a high-privacy regime to demonstrate where we expect to see improvements and where we expect post-processing based results to be similar. Our goal is not to reverse engineer metrics that unfairly make our mechanism look better. However, improved data utility from the mechanism is only one small goal of our work. Aside from the discussion of Figures \ref{fig:wass_base_complete} and \ref{fig:wass_base_relative}, we want to highlight one additional structural advantage our method has in comparison to the post-processing method. In order to make statistically valid inferences from these statistics, we need to rely on algorithms that evaluate the density of $Y$ given $X, Z$ up to a constant. This is neither analytically or computationally tractable under the post-processing method, but our methods accommodate the application of such measurement error corrections. As a result, our method offers a practical, operational advantage over post-processing that cannot be captured by a direct comparison of errors. 

\begin{figure}
    \centering
    \includegraphics[width=.9\textwidth]{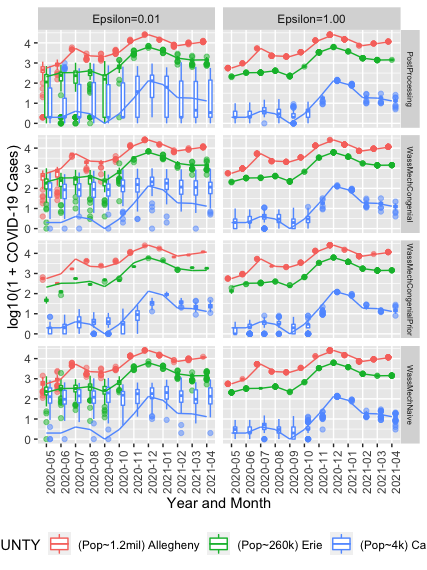}
    \caption{Simulated distribution of PPD replicates for COVID-19 case rates under different privacy budgets ($\epsilon=\{0.01, 1\}$) and synthesis strategies. Results shown for three counties in PA of different population sizes (Allegheny, 1.2m, red; Erie, 260k,green; Cameron, 4k, blue). Confidential (i.e., true) counts represented by solid line.}
    \label{fig:wass_base_complete}
\end{figure}

\begin{figure}
    \centering
    \includegraphics[width=.9\textwidth]{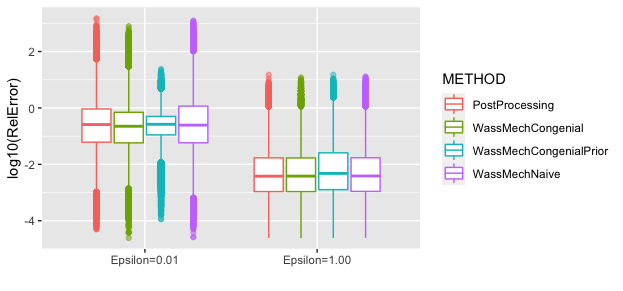}
    \caption{Simulated distribution of PPD relative error for COVID-19 case rates under different privacy budgets ($\epsilon=\{0.01, 1\}$) and synthesis strategies.}
    \label{fig:wass_base_relative}
\end{figure}

\subsection{Inference Analysis}

Algorithm \ref{alg:ppi-rej} was used in \citep{Seeman2020} to perform private posterior inference for mortality rates from historical CDC data. Here, we would like to compare the inferential improvements of results directly. However, this is intractable for one primary reason: we cannot easily compute the Bayes estimator $\delta_{Y^*}$ for our data, nor do we have an exact parameter estimate with which to compare. To get around this issue, in the proof of Theorem \ref{thm:expfamdom}, we used the fact that post-processing maps multiple values of $Y$ with different inferential interpretations to the same value of $Y^*$. Therefore we can investigate how frequently this effect occurs at different privacy budget configurations. This extends the analysis of \cite{santos2020differential} by specifically looking at the effect post-processing has on health disparities in the context of small-area estimates for COVID-19 prevalence. 

In particular, the post-processing as defined earlier produces the following effects:
\begin{itemize}
    \item \texttt{ContractedCaseZeros}: cases where multiple potential imputations of the private COVID-19 case data are contracted to 0
    \item \texttt{ContractedDeathZeros}: cases where multiple potential imputations of the private COVID-19 death data are contracted to 0
    \item \texttt{ContractedRates}: cases where the constrait that COVID-19 cases is bounded below by COVID-19 deaths contracts imputations of the COVID-19 survival rate to 0.
\end{itemize}

\begin{figure}[!htpb]
    \centering
    \includegraphics[width=.9\textwidth]{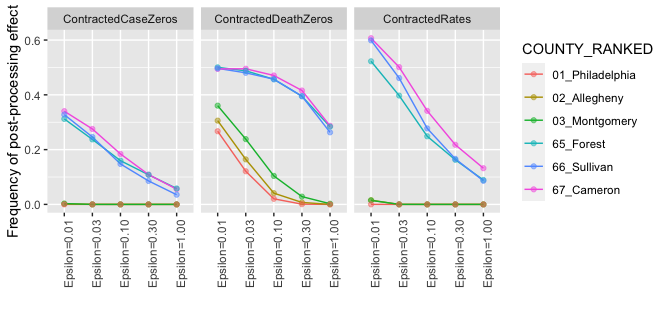}
    \caption{Frequency of post-processing effects for most and least populous counties in PA by privacy budget.}
    \label{fig:pp_inference_effects}
\end{figure}

We calculate the prevalence of these effects in our synthesis and plot the results in Figure \ref{fig:pp_inference_effects}. We only show the results for the cases with the largest and smallest numbers of total COVID-19 cases. In general, we see that degradation of inference due to post-processing occurs most frequently in the high-privacy, low-sample-size regime. These effects disproportionately fall on the least populous counties, demonstrating that post-processing exacerbates the inequitable inference on larger counties versus smaller counties present in smaller data sets. This problem is tempered not only by using the Wasserstein exponential mechanism, but also by performing regularization with a well-chosen base measure. 

\subsection{Distribution Specification and $\Delta_Z$}
\label{apx:delta_z}

In the previous subsections, we assumed a fixed $\Delta_Z = 2$ for our calculations, since our goal was to compare mechanisms at the same fixed Wasserstein Exponential Mechanism sensitivity. However, we could consider different data generating distributions to see how $\Delta_Z$ varies based on the calculations from Equation \label{eq:discrete_delta_z}. 

For simplicity, we'll consider one count from a database of size $n$. We consider a family of distributions $\Theta_{(z_0, z_1)}$, indexed by two parameters $z_1$ and $z_0$. Let $i \neq j$ and assume $Z$ defines the following probabilistic public information for all $\theta_{(z_0, z_1)} \in \Theta_{(z_0, z_1)}$:
$$
\p(X_i = 1 \mid s_{j1}, Z = z) \leq z_1 \in [0, 1], \quad \p(X_i = 1 \mid s_{j0}, Z = z) \geq z_0 \in [0, 1]
$$
In words, $z_1$ and $z_0$ capture the worst-case dependence between knowing one member of a secret pair (i.e., whether one person's binary secret value is 0 or 1) and the values of all the other counts that potentially agree with the secret. Using Equation \ref{eq:discrete_delta_z}, we numerically evaluate $\Delta_z$ at different choices for $n$, $z_1$, and $z_0$. We plot these results in Figure \ref{fig:count_deltaz_example}. As the dependence becomes more extreme, i.e. as $z_0$ approaches $0$ and $z_1$ approaches 1, $\Delta_z$ approaches $n$.

\begin{figure}
    \centering
    \includegraphics[width=.95\textwidth]{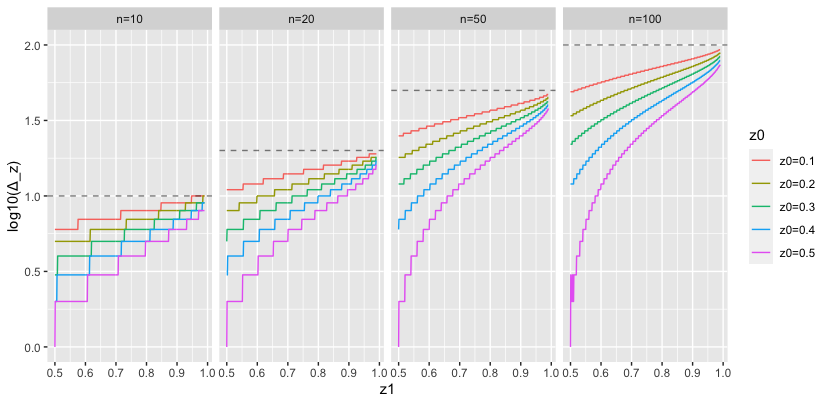}
    \caption{$\Delta_z$ of correlated counts under different distributional assumptions}
    \label{fig:count_deltaz_example}
\end{figure}

\section{Discussion}

Our work provides a statistically reasoned framework for understanding privacy guarantees for PPD by extending Pufferfish to consider distributions conditioned on public information. Although we have focused on $\epsilon$-DP style privacy semantics, we could have easily considered $f$-DP \citep{dong2019gaussian}, $\rho$-zero concentrated DP \citep{bun2016concentrated}, or other semantics which allow for an inferential testing perspective. Future work should extend the proposed approach to those frameworks.

Many view Pufferfish as harder to operationalize than DP, which some may view as a practical limitation of our work. To this, we have a few comments. First, as a reminder, we are addressing a class of problems for which DP alone does not fully describe the formal privacy loss from the data curator's actions; comparing our work to pure DP relies on a false equivalence between the underlying problems at hand. This is an important conceptual trade-off: DP may be easier to interpret, but it can fail to capture the actions taken by data curators in these public information scenarios. Second, we consistently show throughout this paper that the tools used in DP mechanism design can be reused in our setting, and our framework provides a way to re-contextualize those guarantees and more helpfully use public information. Our work is not meant to replace DP, but to complement it. Third, and most importantly, our framework allows for more flexible semantic privacy interpretations than those offered by the usual semantic interpretations associated with DP. This offers data curators an opportunity to be more transparent about their semantic privacy guarantees in unideal but realistic privacy-preserving data processing systems. We argue this is an essential part of making data processing more transparent to end users, both for better end data utility and data curator accountability. 

Note that our work is designed as a way to accommodate public information into privacy loss analysis, but this itself is not a unilateral \textit{endorsement} of releasing public information to accommodate DP results. The  privacy affordances offered by DP are kept closest to their idealized guarantees with no public information released outside the confines of the framework. Whenever possible, data curators should attempt to release results with randomization to satisfy DP, and carefully consider when certain cases that do not logistically accommodate this randomization occur. 

Still, our proposed work is valuable because situations that necessitate public information arise frequently in the social sciences, in particular for the production of official statistics, administrative data, and survey data. The US Census Bureau's public information use case was mandated by public law \citep{94th_congress_of_the_united_states_of_america_pl_1975}, but other cases, such as public disclosure of survey methodology \citep{seeman2021posterior}, help more generally to quantitatively establish trust in the quality of data created by the curator. In cases like these, disclosing public information not only satisfies an important logistical requirement, but also respects a context specific normative practice on how information flows between parties. This context-specificity is an essential component in how privacy is sociologically conceived \citep{nissenbaum2009privacy}, and DP alone often fails to align with this important sociological dimension of privacy. As a result, our framework serves to align important technical and sociological goals in privacy preserving data sharing and analysis currently faced by data curators. 

\acks{This work was supported by NSF SES-1853209. Seeman was partially supported by a U.S. Census Bureau Dissertation Fellowship. We declare no known conflicts of interest.}

\newpage

\appendix

\section{Proofs}
\label{apx:proofs}

\subsection{Proof of Lemma \ref{lem:etp-bayes}}

Let $z \in \mathcal{Z}$ $B\in \mathcal{F}_Y$, $\theta_z \in \mathbb{D}_{\mathrm{DPz}}$, and $(s_1, s_2) \in \mathbb{S}_{\mathrm{pairs,DPz}}$. By Bayes rule:
$$
\frac{\p(s_1 \mid Y \in B, \theta_z)/ \p(s_2 \mid Y \in B, \theta_z)}{\p(s_1 \mid \theta_z) / \p(s_2 \mid \theta_z)} = \frac{\p(Y \in B \mid s_1, \theta_z)}{\p(Y \in B \mid s_2, \theta_z)}
$$
Note that we avoid dividing by zero in either quantity because we assume $\theta_z \in \mathbb{D}_{\mathrm{DPz}}$ and $(s_1, s_2) \in \mathbb{S}_{\mathrm{pairs,DPz}}$. The ratio on the right is within $[e^{-\epsilon}, e^{\epsilon}$] if and only if Equation \ref{eq:puffcond} holds.
\subsection{Proof of Lemma \ref{lem:etp-pp}}
Fix $z \in \mathcal{Z}$, $(s_1, s_2) \in \mathbb{S}_{\mathrm{pairs,DPZ}}$, and $\theta_z \in \mathbb{D}_{\mathrm{DPz}}$. For $B \in \mathcal{F}_Y$, let $B^*_z \in \mathcal{F}_{Y^* \times Z}$ and define $B_z^{*-1} \triangleq \{ y \in \mathcal{Y} \mid h(y,z) = Y^* \}$. Then $B^{*-1}_z \in \mathcal{F}_{Y}$ for all $z \in \mathcal{Z}$ by measurability. This implies:
\begin{align}
\p(Y^* \in B^*_z \mid s_1, \theta_z) &= \p(Y \in B_z^{*-1} \mid s_1, \theta_z) \\
&\leq \p(Y \in B_z^{*-1} \mid s_2, \theta_z) e^\epsilon \\
&= \p(Y^* \in B^*_z \mid s_2, \theta_z) e^\epsilon .
\end{align}

\subsection{Proof of Lemma \ref{lem:etp-seqcomp}}
Let $Y \triangleq (Y_1, Y_2)$ on $(\mathcal{Y}, \mathcal{F}_Y)$. By the conditional independence assumption, there exists $B_1, B_2 \in \mathcal{F}_{Y_1} \times \mathcal{F}_{Y_2}$ such that:
\begin{align*}
\p(Y \in B \mid s_1, \theta_z) &= \p(Y_1 \in B_1 \mid s_1, \theta_z) \p(Y_2 \in B_2 \mid s_1, \theta_z) \\
&\leq \left( e^{\epsilon_1} \p(Y_1 \in B_1 \mid s_2, \theta_z) \right) \left( e^{\epsilon_2} \p(Y_2 \in B_2 \mid s_2, \theta_z) \right) \\
&= e^{(\epsilon_1 + \epsilon_2)} \p(Y \in B \mid s_2, \theta_z) .
\end{align*}

\subsection{Proof of Lemma \ref{lem:etp-parcomp}}
First, we consider $B = 2$. Let $Y \triangleq (Y_1, Y_2)$ on $(\mathcal{Y}, \mathcal{F}_Y)$. If $s_1$ and $s_2$ both describe entries
in the same partition (either $X_1$ or $X_2$, without loss of generality), then the result is trivially satisfied. If $s_1$ describes an entry in $X_1$ and $s_2$ describes an entry in $X_2$, then:
\begin{align*}
\p(Y \in B \mid s_1, \theta_z) &= \p(Y_1 \in Y_1^{-1}(B) \mid s_1, \theta_z) \\
&\leq  e^\epsilon \p(Y_1 \in Y_1^{-1}(B) \mid s_2, \theta_z) \\
&= e^\epsilon \p(Y \in B \mid s_2, \theta_z).
\end{align*}
Therefore in all cases, the result holds for $B = 2$. By induction, this holds for any sized finite partition, i.e. $B \in \mathbb{N}$.

\subsection{Proof of Lemma \ref{lem:etp-cond}}
This proof follows \citep[Theorem 2.1]{gong2020congenial} (in their setting with $k = 1$ and $\gamma = 1$). Without loss of generality, let $z = 1$ (otherwise, replace $\mathcal{X}^{*n}$ with $\mathcal{X}^n \setminus \mathcal{X}^{*n}$). Specifically, for $(s_1, s_2) \in \mathbb{S}_{\mathrm{pairs,DPz}}$ and $B \in \mathcal{F}_Y$:
\begin{align*}
\frac{\p(Y \in B \mid s_1, Y \in \mathcal{X}^{n^*})}{\p(Y \in B \mid s_2, Y \in \mathcal{X}^{n^*})} &= \frac{\p(Y \in B \cap \mathcal{X}^{n^*} \mid s_1)}{\p(Y \in B \cap \mathcal{X}^{n^*} \mid s_2)} \frac{\p(Y \in \mathcal{X}^{n^*} \mid s_2)}{\p(Y \in \mathcal{X}^{n^*} \mid s_1)} \\
&\leq \left( e^\epsilon \right) \left( e^\epsilon \right) \\
&\leq e^{2\epsilon}
\end{align*}

\subsection{Proof of Theorem \ref{thm:wassexp}}

Let $(s_1, s_2) \in \mathbb{S}_{\mathrm{pairs,DPz}}$ and $B \in \mathcal{F}_Y$. Let $\mu_{i,\theta_z} \triangleq \p(\cdot \mid \theta_z, s_i)$. Let $\gamma^*$ be the joint distribution that achieves the Wasserstein distance bound. Then:
\begin{align}
\frac{\p(Y \in B \mid s_1, \theta_z)}{\p(Y \in B \mid s_2, \theta_z)} &= \frac{\int_{\mathcal{X}} \p(Y \in B \mid s_1, \theta_z, X=x) \ d\mu_{1, \theta_z}(x) }{\int_{\mathcal{X}} \p(Y \in B \mid s_2, \theta_z, X=x) \ d\mu_{2, \theta_z}(x)} \\
&= \frac{\int_{\mathcal{Y}} \int_{\mathcal{X}} \ind{y \in B} \exp\left(-\frac{\epsilon L_x(y)}{2\sigma(\Delta_z)} \right) \ d\mu_{1, \theta_z}(x) \ d\nu_z(y) }{\int_{\mathcal{Y}} \int_{\mathcal{X}} \ind{y \in B} \exp\left(-\frac{\epsilon L_X(y)}{2\sigma(\Delta_z)} \right) \ d\mu_{2, \theta_z}(x) \ d\nu_z(y) } \\
&= \frac{\int_{\mathcal{Y}} \int_{\mathcal{X}} \int_{\mathcal{X}} \ind{y \in B}  \exp\left(-\frac{\epsilon L_x(y)}{2\sigma(\Delta_z)} \right) \ d\gamma^*(x, x') \ d\nu_z(y)}{ \int_{\mathcal{Y}} \int_{\mathcal{X}} \int_{\mathcal{X}} \ind{y \in B} \exp\left(-\frac{\epsilon L_{x'}(y)}{2\sigma(\Delta_z)} \right) \ d\gamma^*(x, x')  \ d\nu_z(y)}
\end{align}
Let $B_{\Delta_z}(x) \triangleq \{ x' \in \mathcal{X} \mid d(x,x') \leq \Delta_z \}$. Then by construction and definition of the Wasserstein distance:
\begin{align}
\frac{\p(Y \in B \mid s_1, \theta_z)}{\p(Y \in B \mid s_2, \theta_z)} &= \frac{ \int_{\mathcal{Y}} \int_{\mathcal{X}} \int_{B_{\Delta_z}(x)} \ind{y \in B}  \exp\left(-\frac{\epsilon L_x(y)}{2\sigma(\Delta_z)} \right) \ d\gamma^*(x, x') \ d\nu_z(y)}{ \int_{\mathcal{Y}} \int_{\mathcal{X}} \int_{B_{\Delta_z}(x')} \ind{y \in B} \exp\left(-\frac{\epsilon L_{x'}(y)}{2\sigma(\Delta_z)} \right) \ d\gamma^*(x, x') \ d\nu_z(y) }  \\
&\leq \frac{ \int_{\mathcal{Y}} \int_{\mathcal{X}} \int_{B_{\Delta_z}(x')} \ind{y \in B}  \exp\left(-\frac{\epsilon (L_{x'}(y) + \sigma(\Delta_z)) }{2\sigma(\Delta_z)} \right) \ d\gamma^*(x, x') \ d\nu_z(y)}{ \int_{\mathcal{Y}} \int_{\mathcal{X}} \int_{B_{\Delta_z}(x')} \ind{y \in B} \exp\left(-\frac{\epsilon L_{x'}(y)}{2\sigma(\Delta_z)} \right) \ d\gamma^*(x, x') \ d\nu_z(y) }  \\
&\leq \exp\left(\frac{\epsilon}{2} \right) \exp\left[\frac{\epsilon}{2} \left( \frac{ \sup \left\{ |L_x(y) - L_{x'}(y)| \mid x,x' \in \mathcal{X}^n, d(x, x') \leq \Delta_z  \right\} }{\sigma(\Delta_z)} \right) \right] \\
&\leq \exp(\epsilon)
\end{align}

\subsection{Proof of Theorem \ref{thm:wasskng}}

Let $L_X(y)$ be the loss function from the $K$-norm gradient mechanism, and let:
$$
L_X^*(y) \triangleq \frac{\norm{\nabla L_X(y)}_K}{\sigma_y(\Delta_Z)}
$$
Then in Theorem \ref{thm:wassexp} applied to $L_X^*(y)$, $\sigma(\Delta_z) = 1$. Therefore the Wasserstein $K$-norm gradient mechanism satisfies $\epsilon$-TP. 

\subsection{Proof of Theorem \ref{thm:wasskngutil}}

Here, we fully list the assumptions needed for this theorem:

\begin{enumerate}
    \item $n^{-1} \nabla^2 L_{X_n}(y)$ exists almost everywhere $\nu_{Z_n}$, and the smallest eigenvalue of $L_{X_n}(y)$ is greater than $\alpha > 0$ for all $n \in \mathbb{N}$, $y \in \mathcal{Y}$, $\nu_{Z_n}$ almost everywhere.
    \item There exists a unique $y^* \in \mathcal{Y}$ such that as $n \to \infty$: 
    $$
    Y^*_n \triangleq \argmin_{\tilde{y} \in \mathcal{Y}} L_{X_n}(\tilde{y}) \to_P y^*.
    $$
    \item For all $n \in \mathbb{N}$, $\Delta_{Z_n} \geq \Delta^* > 0$ w.p. 1
    \item $\sigma_y(\Delta^*)$ is continuous in $y$, and $\sigma_y(\Delta^*) \geq \sigma^*(\Delta^*)$ holds $\nu_{Z_n}$ almost everywhere.
    \item For all $n \in \mathbb{N}$
    $$
    \int_{\mathcal{X}} \exp\left(-\frac{\alpha \norm{y_n - y_n^*}_K}{2 \sigma_{y_n^*}(\Delta^*)} \right) d \nu_{Z_n} < \infty.
    $$
    \item $Z_n \to_D Z^*$ and $\nu_{Z_n} \to_D \nu_{Z^*} \ll \lambda$.
    \item Let $B_\delta(\cdot)$ be a $K$-norm ball of radius $\delta$. We assume that for $a \in \mathcal{Y}$ and some $\delta > 0$:
    $$
    \int_{B_\delta(y_n^*)} d\nu_{Z_n}(a)  = \Omega_P(1)
    $$
    In words, the base measure should support the true solution $y^*$ with probability bounded away from zero by some constant.
\end{enumerate}
Let $f_n(y)$ be the density of the Wasserstein $K$-norm gradient mechanism:
$$
f_n(y_n) \propto \exp\left(-\frac{\epsilon \norm{\nabla L_{X_n}(y_n)}_K }{2\sigma_y(\Delta_{Z_n})} \right) d\nu_{Z_n}(y_n),
$$
Let $a_n \triangleq n (y_n - y_n^*)$ with density $h_n(\cdot)$ so that:
$$
h_n(a_n) \propto \exp\left(-\frac{\epsilon \norm{\nabla L_{X_n}(y_n^* + a_n / n )}_K }{2\sigma_y(\Delta_{Z_n})} \right) d\nu_{Z_n}(y_n^* + a_n / n ).
$$
Note that in the transformation above, the Jacobian constant $n^{-1}$ gets absorbed in the proportionality. Next, we Taylor expand the loss function:
\begin{align}
\nabla L_{X_n}(y_n)  = \nabla L_{X_n}\left(y_n^* + a_n / n \right) = \nabla^2 L_{X_n}(y^*_n) \frac{a_n}{n} + o_p(1)
\end{align}
Following the proof of \citep[Theorem 3.2]{reimherr2019kng}, using Assumptions (1) - (4), there exists a constant $C > 0$ such that:
$$
-\frac{1}{\sigma_{y^*_n + a_n/n}(\Delta_{Z_n})} \norm{\nabla L_n(y^*_n + a_n/n) }_K \leq -\frac{C \alpha }{\sigma^*(\Delta^*)} \norm{a_n}_2.
$$
Then by the dominated convergence theorem, Assumptions (5) - (6), and the continuity of $\sigma$, we conclude that:
\begin{equation}
\label{eq:wass_kng_asymp_dens}
    \lim_{n \to \infty} h_n(a_n) = h(a) \propto \exp\left(-\frac{\epsilon}{2 \sigma_{y^*}(\Delta^*)} \norm{\Sigma^{-1} a}_K\right),
\end{equation}
with respect to the base measure $\nu_{Z^*}(a)$ (note that Assumption (6) ensures the asymptotic base measure supports $y^*$). This is a $K$-norm density with respect to the asymptotic base measure $\nu_{Z^*}$. Therefore, $Y_n$ concentrates around $Y_n^*$ at a rate at least as fast as for the standard $K$-norm gradient mechanism, i.e., $a_n = O_p(1)$, yielding the desired result.

\subsection{Proof of Lemma \ref{lem:etp-rep}}

Fix $\delta > 0$, $\bm{x} \in \mathcal{X}$, and let $d_H$ be a metric on $\mathcal{X}$. Then for sufficiently large $m$: 
\begin{align*} 
&\sup_{\theta_z \in \Theta_Z} \sup_{(s_1, s_2) \in \mathbb{S}_{\mathrm{pairs,DPz}}} W_{\infty} \{ \p(\cdot, \mid \theta_z, s_1), \p(\cdot, \mid \theta_z, s_2) \} \\
&\geq \sup_{(s_1, s_2) \in \mathbb{S}_{\mathrm{pairs,DPz}}} W_{\infty} \{ \p(\cdot, \mid \theta_{z,\bm{x}}^{(m)}, s_1), \p(\cdot, \mid \theta_{z,\bm{x}}^{(m)}, s_2) \} \\
&\geq 1 - \delta
\end{align*}
Since this is true for arbitrary $\delta$, then we can ensure $\Delta_z \geq 1$, yielding the desired result.

\subsection{Proof of Theorem \ref{thm:expfamdom}}

First we show that $Y$ has a monotone likelihood in $\theta$. Let $\Xi \triangleq Y - T(X)$. Then the marginal density of $Y$, $m_\theta(y)$ can be expressed as a function of the density of $\Xi$, $g_\Xi(\xi)$ as a convolution:
$$
g_\Xi(\xi) \propto \exp\left(-\frac{\epsilon}{2\sigma(\Delta_z)} L(\xi) \right) \implies m_\theta(y) \triangleq \int_{\mathcal{X}} g_\Xi(y - x) f_\theta(x) \ dx.
$$
Let $y_1, y_2 \in \mathcal{Y}$ such that $y_1 < y_2$ and $\theta_1, \theta_2 \in \Theta$ such that $\theta_1 < \theta_2$. Suppose first that $n = 1$. Note that $f_\theta(x)$ has a monotone likelihood because $f_\theta$ belongs to an exponential family. Similarly, $g_\Xi$ has a monotone likelihood ratio by construction. Then using the main result from \citep{wijsman1985useful}:
\begin{align*}
&\int_\mathcal{X} g_\Xi(y_1 - x) f_{\theta_1}(x) \ dx \int_\mathcal{X} g_\Xi(y_2 - x) f_{\theta_2}(x) \ dx \geq \int_\mathcal{X} g_\Xi(y_2 - x) f_{\theta_1}(x) \ dx \int_\mathcal{X} g_\Xi(y_2 - x) f_{\theta_1}(x) \ dx \\
&\implies m_{\theta_1}(y_1) m_{\theta_2}(y_2) \geq m_{\theta_2}(y_1) m_{\theta_1}(y_2) \\
&\implies \frac{m_{\theta_2}(y_1)}{m_{\theta_1}(y_1)} \leq \frac{m_{\theta_2}(y_2)}{m_{\theta_1}(y_2)}
\end{align*}

The results extends to arbitrary $n$ by the iid assumption, and therefore $Y$ has a monotone likelihood ratio in $\theta$. 

\medskip

Next, let $\phi: \mathcal{Y}^* \mapsto [0, 1]$ be an unbiased test for $H_0: \theta \leq \theta_0$ versus $H_1: \theta > \theta_0$. Fix the type I error as $\alpha$. Using the fact that $Y$ has a monotone likelihood ratio in $\theta$ and the unbiasedness of $\phi$, we know $\beta_\phi(\theta_0) \leq \beta_\phi(\theta_1)$ for all $\theta_0, \theta_1 \in \Theta$ such that $\theta_0 \leq \theta_1$, where:
$$
\beta_{Y^*}(\theta) \triangleq \E_\theta[\phi(Y^*)] = \E_\theta[\phi(Y^*) (\ind{Y = Y^*} + \ind{Y < Y^*} + \ind{Y > Y^*})].
$$
When $\theta = \theta_0$ and $Y < Y^*$, using the MLR property, there exists a test $\phi': \mathcal{Y} \times \mathcal{Z} \mapsto [0, 1]$ such that $\phi'(Y,Z) = \phi(Y^*)$ whenever $Y = Y^*$, and:
$$
\E_{\theta_0}[\phi'(Y,Z) \mid Y < Y^*]  \leq \E_{\theta_0}[\phi(Y^*) \mid Y < Y^*]
$$
Let $\gamma \triangleq \E_{\theta_0}[\phi(Y^*) - \phi'(Y,Z)] \in [0, \alpha]$. We can modify $\phi'$ such that, for $Y > Y^*$:
$$
\E_{\theta_0}[\phi'(Y,Z) - \phi(Y^*) \mid Y > Y^*] = \gamma 
$$
This yields a final test $\phi'(Y,Z)$ which dominates $\phi(Y^*)$ and is still level $\alpha$. As long as $\p_\theta(Y \neq Y^*) > 0$, the dominance is non-trivial; the new test may have a smaller type I error, type II error, or both. 

\subsection{Proof of Theorem \ref{thm:ppi-rej-valid}}

Throughout this proof, we take $\pi$ to refer to the density of the variables in question. The target density has the form:
\begin{align*}
\pi(\theta \mid y, z) &= \frac{\pi(\theta, y \mid z) }{\pi(y \mid z)} \\
&= \frac{\int \pi(\theta, x, y \mid z) \ d \nu_X }{\int \pi(y, x \mid z) \ d \nu_X} \\
&= \frac{\int \pi(\theta \mid z) \pi(x \mid \theta, z) \pi(y \mid x, z) \ d \nu_X }{\int \int \pi(\theta \mid z) \pi(x \mid \theta, z) \pi(y \mid x, z) \ d \nu_X \ d \nu_\Theta}
\end{align*}
Following \citep{gong2019exact}, let $A$ be a Bernoulli variable with success probability associated with accepting one proposal in the algorithm with public information $z$ and proposed samples $\theta^* \in \Theta$ and $x^* \in \mcX^n$. Define:
$$
C \triangleq \sup_{y^* \in \mathcal{Y}} \pi(y^* \mid x^*, z)
$$
Then:
\begin{align*}
\pi(\theta^* \mid A = 1, z) &= \int \frac{ \pi(\theta^*, x^*, A=1) \ d\nu_X}{\pi(A = 1)} \\
&= \frac{C \int \pi(\theta \mid z) \pi(x \mid \theta, z) \pi(y \mid x, z) \ d \nu_X }{C \int \int \pi(\theta \mid z) \pi(x \mid \theta, z) \pi(y \mid x, z) \ d \nu_X \ d \nu_\Theta} \\
&= \pi(\theta \mid y, z)
\end{align*}
Therefore accepted samples in Algorithm \ref{alg:ppi-rej} are exact draws from $\theta \mid Y, Z$.

\section{Complete Algorithms}
\label{apx:algs}

\begin{algorithm}[!htbp]
\SetAlgoLined
\KwResult{One sample from $\theta \mid Y, Z$ }
Sample $\theta^* \sim \pi(\theta \mid Z)$ \; 
Sample $X^* \sim \pi(\cdot \mid \theta^*, Z)$ \; 
Sample $U \sim \mathrm{Unif}(0,1)$ \;
\eIf{
$$
\frac{ \pi(Y \mid X^*, Z) }{ \sup_{y^* \in \mathcal{Y}} \pi(y^* \mid X^*, Z) } \leq U
$$
}{
Return $\theta^*$.
}{
Go to beginning.
}
\caption{Posterior rejection sampling conditional on public information}
 \label{alg:ppi-rej}
\end{algorithm}

\begin{algorithm}[!htbp]
\SetAlgoLined
\KwResult{Estimate of $\widehat{\E}[a(\theta) \mid Y, Z]$ using proposals from density $g$}
Sample $\theta^{(1)}, \cdots, \theta^{(m)} \sim \pi(\theta \mid Z=z)$ \; 
Sample $X^{(j)} \sim \pi(\cdot \mid \Theta=\theta^{(j)}, Z=z)$ \; 
Calculate:
$$
w^{(j)} = \frac{ \pi(Y \mid X^{(j)}, Z) \pi(\theta^{(j)} \mid Z)}{g(\theta^{(j)})}
$$ \\
Return:
$$
\widehat{\E}[a(\theta) \mid Y, Z] \triangleq \frac{\sum_{j=1}^m w^{(j)} a(\theta^{(j)})}{\sum_{j=1}^m w^{(j)}}
$$ 
 \caption{Posterior importance sampling conditional on public information}
 \label{alg:ppi-is}
\end{algorithm}

\section{Complete Example Derivations}
\label{apx:ex_derivs}

\begin{example}[Derivation of Example \ref{ex:mvn_cond}]
\label{apx:mvn_cond}
Suppose we have the setup as given in Equations \ref{eq:mean_expmech_puff} and \ref{eq:mean_expmech_joint}. First, let's consider the case where $Z \independent X_1, \overline{X}$, in which case we need only consider:
\begin{equation}
\label{eq:mean_expmech_joint_no_z}
\begin{pmatrix} \overline{X} \\ X_1 \end{pmatrix} \sim N\left( \begin{pmatrix} \mu \\ \mu \end{pmatrix}, \begin{pmatrix} \sigma^2/n & \sigma^2/n \\ \sigma^2/n & \sigma^2 \end{pmatrix}\right) \implies \overline{X} \mid X_1 = v \sim N\left(\mu + \frac{1}{n}(v - \mu), \frac{n-1}{n^2} \sigma^2 \right)
\end{equation}

Define $\p_{\theta,v_1}(B) \triangleq \p_\theta(\overline{X} \in B \mid X_1 = v_1)$. By definition:
$$
W_\infty\left(\p_{\theta, v_1}, \p_{\theta, v_2} \right) \triangleq \inf_{\gamma \in \Gamma(\p_{\theta, v_1}, \p_{\theta, v_2}) } \esssup_{(y_1, y_2) \in \Omega_{\gamma}} |y_1 - y_2| .
$$

The solution to the optimization problem is explicitly calculable. If $V_1 \sim \p_{\theta, v_1}$ and $V_2 \sim \p_{\theta, v_2}$, then the optimal transport solution takes the joint distribution where, w.p. 1:
\begin{equation}
\label{eq:mean_norm_opt_trans_sol}
V_2 = V_1 + \frac{v_2 - v_1}{n}.
\end{equation}
To give some intuition for Equation \ref{eq:mean_norm_opt_trans_sol}, if we fix $\theta$, then $\{ \p_{\theta,v} \mid v \in \mathbb{R} \}$ is a location family with a location parameter linear in $v$. Therefore in the $W_\infty$ optimal transport solution, each infinitesimal piece of probability mass travels the same constant distance as specified by the difference in conditional means.Any other solution would necessarily require a larger essential supremum.

\medskip

The optimal transport solution in Equation \ref{eq:mean_norm_opt_trans_sol} is independent of both $\mu$ and $\sigma$, and therefore only depends on the secret pairs $(s_1, s_2) \in \mathbb{S}_{\mathrm{pairs}}$ for which:
$$
\sup_{\theta \in \mathbb{D}} \sup_{(s_1, s_2) \in \mathbb{S}_{\mathrm{pairs}}} W_\infty(\p_{\theta,v_1}, \p_{\theta,v_2}) = \frac{\Delta}{n} 
$$
This allows us to instantiate the Wasserstein exponential mechanism. First, for all databases whose $L_1$ distance is bounded by $\Delta / n$ we have:
\begin{equation}
\sup_{X,X' \colon \norm{X - X'}_{L_1} \leq \Delta/n} |L_X(y) - L_{X'}(y)| \leq \Delta / n 
\end{equation}
Therefore the Wasserstein exponential mechanism gives us exactly the same result as the $\epsilon$-DP exponential mechanism in Equation \ref{eq:mean_expmech_dp} without additional public information.

Next, we introduce $Z$. Performing a similar analysis using Equation \ref{eq:mean_expmech_joint} gives us:
\begin{equation}
\label{eq:mean_expmech_cond_z}
\overline{X} \mid X_1 = v, Z = z \sim N\left(\mu + \Sigma_{XVZ}^T \Sigma_{VZ}^{-1} \begin{pmatrix}
v - \mu \\ z - \mu_{Z}
\end{pmatrix}, \frac{\sigma^2}{n} - \Sigma_{XVZ}^T \Sigma_{VZ}^{-1} \Sigma_{XVZ} \right).
\end{equation}
Again, by properties of the conditional multivariate normal, the resulting family is still a location family parameterized by a (more complicated) linear change in $v$. However, this indicates that the dependence affects the Wasserstein distance through the nuisance parameter $\sigma^2$, which canceled in the previous case:
$$
\sup_{\theta \in \mathbb{D}} \sup_{(s_1, s_2) \in \mathbb{S}_{\mathrm{pairs}}} W_\infty(\p_{\theta,v_1}, \p_{\theta,v_2}) = \sup_{\theta \in \mathbb{D}} \left[ \Sigma_{XVZ}^T \Sigma_{VZ}^{-1} \begin{pmatrix}
\Delta \\ z - \mu_{Z}
\end{pmatrix} \right] . 
$$
Therefore in this situation, the ``sensitivity" of the Wasserstein exponential mechanism loss depends explicitly on what kinds of dependence we allow between the private and public information, i.e. how we choose $\mathbb{D}$. 
\end{example}

\begin{example}[Count Data Satisfying $\epsilon$-TP and SDL measures]
\label{apx:count_sdl}Let $X \in \{ 0, 1 \}^n$, and our goal is to implement the Wassserstein exponential mechanism targeting $S_n \triangleq \sum_{i=1}^n X_i$ with $L_X(y) = |y - S_n |$. The secret pairs are of the form:
$$
s_{ij} = \{ \omega \in \Omega \mid X_i(\omega) = j \}, \quad \mathbb{S}_{\mathrm{pairs}} \triangleq \left\{ (s_{i0}, s_{i1}) \mid i \in [n] \right\}.
$$
Possible distributions conditioned on the secrets take the form:
$$
\p\left( S_n = k \mid s_{i0}, Z=z \right) = f_{\theta_z,0}(k), \quad \p\left( S_n = k \mid s_{i1}, Z=z \right) = f_{\theta_z,1}(k).
$$
Therefore, we need to consider:
\begin{equation}
\label{eq:discrete_delta_z}
 \Delta_z = \sup_{\theta_z \in \Theta_z} W_\infty\left(\p_{\theta_z,0}, \p_{\theta_z,1} \right) = \sup_{\theta_z \in \Theta_z} \sup_{t \in \mathcal{T}} | F^-_{\theta_z,0}(t) - F^-_{\theta_z,1}(t) | \in \{ 1, \dots, n \} ,  
\end{equation}
where
$$
\mathcal{T} = \left\{ \p\left(S_n \leq k \mid s_{j0}, Z=z \right) \mid j \in \{ 1, \dots n\} \right\}.
$$
We consider public information $Z$ to be the property that releasing $(S_n, n - S_n)$ satisfies different Statistical Disclosure Limitation (SDL) measures, namely $k$-anonymity \citep{sweeney2002k} and $L_1$-distance $t$-closeness \citep{li2007t} to a public statistic $t^*$:
\begin{equation}
\label{eq:ex_count_sdl}
\begin{cases}
S_n \text{ satisfies $k$-anonymity} \implies S_n \in \{ k, k + 1, \dots, n - k \} \\
S_n \text{ satisfies $L_1$-$t$-closeness} \implies \left| \frac{S_n}{n} - t^* \right| \leq t .
\end{cases}
\end{equation}
Because both of these are descriptors of the secret pairs, we can jointly satisfy some SDL measures as well as $\epsilon$-DP. However, this comes at the expense of the guarantees holding for fewer secret pairs.
\end{example}

Example \ref{ex:count_sdl} demonstrates how $\epsilon$-TP changes the unit of privacy analysis in the presence of $Z$. While SDL methods typically consider an equivalence class of databases with the same SDL properties, and DP methods typically consider an entire database schema, our $\epsilon$-TP considers a partial database schema. This means $\epsilon$-TP only confers protections for those databases which plausibly agree with $Z$, but this partial schema could be constrained by the practical privacy guarantees afforded by SDL.

\newpage 

\bibliography{references}

\end{document}